\documentclass[a4paper,UKenglish,cleveref, autoref, thm-restate]{lipics-v2021}

\nolinenumbers
\bibliographystyle{plainurl}

\title{Weighted Ancestors in Suffix Trees Revisited} 


\author{Djamal Belazzougui}{Centre de Recherche sur l'Information Scientifique et Technique, Algiers, Algeria}{dbelazzougui@cerist.dz}{}{}

\author{Dmitry Kosolobov}{Institute of Natural Sciences and Mathematics, Ural Federal University, Ekaterinburg, Russia}{dkosolobov@mail.ru}{https://orcid.org/0000-0002-2909-2952}{Supported by the grant 20-71-00050 of the Russian Foundation of Basic Research (for the final version of the data structure and for its construction algorithm).}

\author{Simon J. Puglisi}{Department of Computer Science, University of Helsinki, Helsinki, Finland}{puglisi@cs.helsinki.fi}{https://orcid.org/0000-0001-7668-7636}{}

\author{Rajeev Raman}{Department of Informatics, University of Leicester, Leicester, United Kingdom}{r.raman@leicester.ac.uk}{https://orcid.org/0000-0001-9942-8290}{}

\authorrunning{D. Belazzougui, D. Kosolobov, S.\,J. Puglisi, R. Raman} 

\Copyright{Djamal Belazzougui, Dmitry Kosolobov, Simon J. Puglisi, and Rajeev Raman} 

\ccsdesc[500]{Theory of computation~Pattern matching}

\keywords{suffix tree, weighted ancestors, irreducible LCP, deterministic substring hashing} 

\category{} 

\relatedversion{} 






\EventEditors{Pawe{\l} Gawrychowski and Tatiana Starikovskaya}
\EventNoEds{2}
\EventLongTitle{32nd Annual Symposium on Combinatorial Pattern Matching (CPM 2021)}
\EventShortTitle{CPM 2021}
\EventAcronym{CPM}
\EventYear{2021}
\EventDate{July 5--7, 2021}
\EventLocation{Wroc{\l}aw, Poland}
\EventLogo{}
\SeriesVolume{191}
\ArticleNo{35}

\newcommand\Oh{\mathcal{O}}
\newcommand\lcp{\mathop{\mathsf{lcp}}}
\newcommand\plcp{\mathop{\mathsf{plcp}}}
\newcommand\str{\mathop{\mathsf{str}}}
\newcommand\LCP{\mathsf{lcp}}
\newcommand\PLCP{\mathsf{plcp}}
\newcommand\BWT{\mathsf{bwt}}
\newcommand\SA{\mathsf{sa}}
\newcommand\ISA{\mathsf{isa}}

\begin{document}

\maketitle

\begin{abstract}
The weighted ancestor problem is a well-known generalization of the predecessor problem to trees. It is known to require $\Omega(\log\log n)$ time for queries provided $\Oh(n\mathop{\mathrm{polylog}} n)$ space is available and weights are from $[0..n]$, where $n$ is the number of tree nodes. However, when applied to suffix trees, the problem, surprisingly, admits an $\Oh(n)$-space solution with constant query time, as was shown by Gawrychowski, Lewenstein, and Nicholson (Proc. ESA 2014). This variant of the problem can be reformulated as follows: given the suffix tree of a string $s$, we need a data structure that can locate in the tree any substring $s[p..q]$ of $s$ in $\Oh(1)$ time (as if one descended from the root reading $s[p..q]$ along the way). Unfortunately, the data structure of Gawrychowski et~al.~has no efficient construction algorithm, limiting its wider usage as an algorithmic tool. In this paper we resolve this issue, describing a data structure for weighted ancestors in suffix trees with constant query time and a linear construction algorithm. Our solution is based on a novel approach using so-called irreducible LCP values.
\end{abstract}

\section{Introduction}
\label{sec:intro}

The suffix tree is one of the central data structures in stringology. Its primary application is to determine whether an arbitrary string $t$ occurs as a substring of another string, $s$, which can be done in time proportional to the length of $t$ by traversing the suffix tree of $s$ downward from the root and reading off the symbols of $t$ along the way.
Many algorithms using suffix trees perform this procedure for substrings $t = s[p..q]$ of $s$ itself. In this important special case, the traversal can be executed much faster than $\Oh(q - p + 1)$ time provided the tree has been preprocessed to build some additional data structures~\cite{ALLS,FarachMuthukrishnan,GLN}; particularly surprising is that the traversal can be performed in constant time using only linear space, as shown by Gawrychowski et al.~\cite{GLN}.
In this paper, we describe the first linear construction algorithm for such a data structure. The lack of an efficient construction algorithm for the result of~\cite{GLN} has been, apparently, the main obstacle hindering its wider adoption. Our solution is completely different from that of~\cite{GLN} and is based on a combinatorial result of K{\"a}rkk{\"a}inen et al.~\cite{KarkkainenManziniPuglisi} (see also~\cite{KarkkainenKempaPiatkowski}) that estimates the sum of irreducible LCP values (precise definitions follow).

As one might expect, the described data structure has a multitude of applications:~\cite{ALLS,BiswasEtAl,BiswasEtAl2,FarachMuthukrishnan,KempaPolicritiPrezzaRotenberg,KociumakaEtAl,KopelowitzEtAl}, to name a few. We, however, do not dive further into a discussion of these applications and refer the reader to the overview in~\cite[Sect. 1]{GLN} and references therein for more details.


In order to ``locate'' a substring $t = s[p..q]$ in the suffix tree of $s$, it suffices to answer the following query: given a node of the tree that corresponds to the suffix of $s$ starting at position $p$ (usually, it is a leaf), we should find the farthest (closest to the root) ancestor $v$ of this node such that the string written on the root--$v$ path has length at least $q - p + 1$. This problem is a particular case of the general \emph{weighted ancestor problem}~\cite{ALLS,FarachMuthukrishnan,KopelowitzLewenstein}: given a tree whose nodes are associated with integer weights such that the weights decrease if one ascends from any node to the root (the weight of a node in the suffix tree is the length of the string written on the root--node path), the tree should be preprocessed in order to answer \emph{weighted ancestor queries} that, for a given node $v$ and a number $w$, return the farthest ancestor of $v$ whose weight is at least $w$. The problem can be viewed as a generalization to trees of the predecessor search problem, in which we preprocess a set of integers to support predecessor queries: for any given number $w$, return the largest integer from the set that precedes $w$.

Clearly, any linear-space solution for weighted ancestors can be used as a solution for predecessor search. As was shown in~\cite{KopelowitzLewenstein} and~\cite{GLN}, a certain converse reduction is also possible: the weighted ancestor queries for a tree with $n$ nodes and integer weights from a range $[0..U]$ can be answered in $\Oh(\mathrm{pred}(n, U))$ time using linear space, where $\mathrm{pred}(n, U)$ is the time required to answer predecessor search queries for any set of $k \le n$ integers from the range $[0..U]$ using $\Oh(k)$ space. Therefore, when $U = n$ (as in the case of suffix tree), both problems can be solved in linear space with $\Oh(\log\log n)$-time queries using the standard van Emde Boas or y-fast trie data structures~\cite{vanEmdeBoas,Willard}.

Due to the lower bound of P{\u{a}}tra{\c{s}}cu and Thorup~\cite{PatrascuThorup}, the time $\Oh(\log\log n)$ is optimal for $U = n$ when the available space is linear, and moreover, any solution of the weighted ancestor problem that uses $\Oh(n\mathop{\mathrm{polylog}} n)$ space must spend $\Omega(\log\log n)$ time on queries (see~\cite[Appendix A]{GLN}). In view of this lower bound, it is all the more unexpected that the special case of suffix trees admits an $\Oh(n)$-space solution with constant query time. In order to circumvent the lower bound, Gawrychowski et al.~\cite{GLN} solve predecessor search problems on some paths of the suffix tree using $\Oh(n)$ \emph{bits} of space (or slightly less), which admits a constant time solution by a so-called rank data structure; because of the internal repetitive structure of the suffix tree, the solution for one path can be reused in many different paths in such a way that, in total, the utilized space is linear. Our approach essentially relies on the same intuition but we perform path predecessor queries on different trees closely related to the suffix tree and the advantages of repetitive structures come implicitly; in particular, we do not explicitly treat periodic regions of the string separately so that, in this regard, our solution is more ``uniform'', in a sense, than that of~\cite{GLN}.

This paper is organized as follows. In Section~\ref{sec:prelim} the problem is reduced to certain path-counting queries using (unweighted) level ancestor queries. Section~\ref{sec:queries-at-irreducible} describes a simple solution for the queries locating substrings $s[p..q]$ whose position $p$ corresponds to an irreducible LCP value. In Section~\ref{sec:reduction-to-irreducible} we reduce general queries to the queries at irreducible positions and a~certain geometric orthogonal predecessor problem. Section~\ref{sec:special-weighted-ancestors} describes a solution for this special geometric problem. Conclusions and reflections are then offered in Section~\ref{sec:conclusion}.

\section{Basic Data Structures}
\label{sec:prelim}

Let us fix a string $s$ of length $n$, denoting its letters by $s[0], s[1], \ldots, s[n{-}1]$. We write $s[p..q]$ for the \emph{substring} $s[p]s[p{+}1]\cdots s[q]$, assuming it is empty if $p > q$; $s[p..q]$ is called a \emph{suffix} (resp., \emph{prefix}) of $s$ if $q = n - 1$ (resp., $p = 0$). For any string $t$, let $|t|$ denote its length. We say that $t$ \emph{occurs} at position $p$ in $s$ if $s[p..p{+}|t|{-}1] = t$. Denote $[p..q] = \{k \in \mathbb{Z} \colon p\le k\le q\}$.

A \emph{suffix tree} of $s$ is a compacted trie containing all suffixes of $s$. The labels on the edges of the tree are stored as pointers to corresponding substrings of $s$. For each tree node $v$, denote by $\str(v)$ the string written on the root--$v$ path. The number $|\str(v)|$ is called the \emph{string depth} of $v$. The \emph{locus} of a substring $s[p..q]$ is the (unique) node $v$ of minimal string depth such that $s[p..q]$ is a prefix of $\str(v)$.

The string $s$ and its suffix tree $T$ are the input to our algorithm. To simplify the exposition, we assume that $s[n{-}1]$ is equal to a special letter $\$$ that is smaller than all other letters in $s$, so that there is a one-to-one correspondence between the suffixes of $s$ and the leaves of~$T$. Our computational model is the word-RAM and space is measured in $\Theta(\log n)$-bit machine words. Our goal is to construct in $\Oh(n)$ time a data structure that can find in the tree the locus of any substring $s[p..q]$ of $s$ in $\Oh(1)$ time.

A \emph{level ancestor query} in a tree asks, for a given node $v$ and an integer $d \ge 1$, the $d$th node on the $v$--root path (provided the path has at least $d$ nodes). It is known that any tree can be preprocessed in linear time to answer such queries in constant time~\cite{BenderFarachColton,BerkmanVishkin}. With such a structure, the locus of a substring $s[p..q]$ can be found by first locating the leaf $v$ of $T$ corresponding to $s[p..n{-}1]$ (i.e., $\str(v) = s[p..n{-}1]$, which can be located using a precomputed array of length $n$) and, then, counting the number $d$ of nodes $u$ on the $v$--root path such that $|\str(u)| > q - p$; then, evidently, the locus of $s[p..q]$ is given by the level ancestor query on $v$ and $d$.

We have to complicate this scheme slightly since the machinery that we develop in the sequel allows us to count only those nodes $u$ on the $v$--root path that have a branch to the ``left'' of the path (or, symmetrically, to the ``right''). More formally, given a node $v$, a node $u$ on the $v$--root path is called \emph{left-branching} (resp., \emph{right-branching}) if there is a suffix $s[p..n{-}1]$ that is lexicographically smaller (resp., greater) than the string $\str(v)$ and its longest common prefix with $\str(v)$ is $\str(u)$. For instance, the path from the leaf corresponding to the string $sippi\$$ in Figure~\ref{fig:suffix-tree} has four nodes but only one of them (namely, the root) is left-branching.

\begin{lemma}
For any suffix tree, one can build in linear time a data structure that, for any node $v$ and integer $d \ge 1$, can return in $\Oh(1)$ time the $d$th left-branching (or right-branching) node on the $v$--root path.\label{lem:left-branch-la}
\end{lemma}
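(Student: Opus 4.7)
The plan is to reduce the query to a standard unweighted level-ancestor query on a small derived tree. First I would establish the following characterization. Order the children of every node of $T$ lexicographically, and call a non-root node $w$ \emph{L-marked} iff it is not the leftmost child of its parent. Then, for any node $v$, the left-branching proper ancestors of $v$ are precisely the images under $w\mapsto\mathrm{parent}_T(w)$ of the L-marked nodes on the $v$--root path. Indeed, an ancestor $u$ of $v$ is left-branching exactly when the child $w$ of $u$ on the path to $v$ has a lexicographically smaller sibling, i.e.\ when $w$ is L-marked, because a leaf in such a sibling's subtree realises a suffix with the required LCP; and thanks to the $\$$-sentinel, $v$ itself is never left-branching (no suffix of $s$ can be a proper prefix of $\str(v)$). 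The map preserves the root-ward order along the path, since for two consecutive L-marked nodes $w,w'$ on the path with $w'$ the nearest L-marked proper ancestor of $w$, the node $\mathrm{parent}_T(w)$ lies strictly above $w$ and at or below $w'$, hence strictly below $\mathrm{parent}_T(w')$.

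With this characterization in hand, the construction is straightforward. In one DFS of $T$ I would flag every non-leftmost child (total work $\Oh(n)$ since the children sum to $n-1$) while maintaining a stack of L-marked ancestors of the current node; this yields for every node $v$ of $T$ its nearest (non-strict) L-marked ancestor $v^\circ$ and, for every L-marked node, its nearest strict L-marked ancestor. I then build the auxiliary tree $T_L$ whose nodes are the root of $T$ together with all L-marked nodes, with parent edges given by ``nearest strict L-marked ancestor in $T$'' (or the root if none), and preprocess $T_L$ with the linear-time, constant-query level-ancestor structure~\cite{BenderFarachColton,BerkmanVishkin}. For each L-marked node $w$ I also store $\mathrm{parent}_T(w)$ explicitly.

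To answer a query $(v,d)$ in $\Oh(1)$, I look up $v^\circ$; if $v$ has no L-marked ancestor then no left-branching ancestor exists. Otherwise a level-ancestor query in $T_L$ on $v^\circ$ returns the $d$th L-marked node $w_d$ on the $v$--root path (with $v^\circ$ counted as the first), and the desired answer is the stored $\mathrm{parent}_T(w_d)$. Out-of-range values of $d$ are caught by first comparing $d$ against the $T_L$-depth of $v^\circ$. The right-branching variant is identical after swapping ``leftmost'' for ``rightmost'' throughout.

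I do not foresee a substantive obstacle here: every ingredient is either a single DFS pass or an off-the-shelf tree structure. The one step requiring genuine care is the off-by-one bookkeeping between the L-marked nodes on the $v$--root path and the corresponding left-branching ancestors on that path, and this is fully pinned down by the parent-of-L-marked-node bijection above.
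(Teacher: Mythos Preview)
Your proposal is correct and follows essentially the same approach as the paper: build an auxiliary tree that collapses the non-branching structure and answer queries via the linear-time/constant-query level-ancestor data structure. The paper's proof is terser---it directly sets the parent of each node to its nearest left-branching ancestor in $T$ (equivalently, $\mathrm{parent}_T$ of its nearest L-marked ancestor, in your terminology) and queries that tree---whereas you route through the tree of L-marked nodes and apply $\mathrm{parent}_T$ afterwards, but this is only a cosmetic difference in how the same reduction is packaged.
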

\begin{proof}
Traversing the suffix tree, we construct another tree on the same set of nodes in which the parent of each non-root node is either its nearest left-branching ancestor in the suffix tree (if any) or the root. The queries for left-branching nodes can be answered by the level ancestor structure~\cite{BenderFarachColton,BerkmanVishkin} built on this new tree.  The right-branching case is symmetric.
\end{proof}

Thus, to find the locus for $s[p..q]$, we have to count  on a leaf--root path the number of left- and right-branching nodes whose string depths are greater than the threshold $q - p$; we then use these two numbers in the data structure of Lemma~\ref{lem:left-branch-la} in order to find two candidate nodes and the node with smaller string depth is the locus. In the remaining text, we focus only on this counting problem and only on left-branching nodes as the right-branching case is symmetric. First, however, we define a number of useful standard structures.

The \emph{suffix array} of $s$ is an array $\SA[0..n{-}1]$ containing integers from $0$ to $n-1$ such that $s[\SA[0] .. n{-}1] < s[\SA[1] .. n{-}1] < \cdots < s[\SA[n{-}1] .. n{-}1]$ lexicographically~\cite{ManberMyers}. The \emph{inverse suffix array}, denoted $\ISA[0..n{-}1]$, is defined as $\SA[\ISA[p]] = p$, for all $p \in [0..n{-}1]$. For any positions $p$ and $q$, denote by $\lcp(p, q)$ the length of the longest common prefix of $s[p..n{-}1]$ and $s[q..n{-}1]$. The \emph{longest common prefix (LCP) array} is an array $\LCP[0..n{-}1]$ such that $\LCP[0] = 0$ and $\LCP[i] = \lcp(\SA[i{-}1], \SA[i])$, for $i \in [1..n{-}1]$. The \emph{permuted longest common prefix (PLCP) array} is an array $\plcp[0..n{-}1]$ such that $\plcp[p] = \LCP[\ISA[p]]$, for $p \in [0..n{-}1]$. The \emph{Burrows--Wheeler transform}~\cite{BurrowsWheeler} is a string $\BWT[0..n{-}1]$ such that $\BWT[i] = s[\SA[i]{-}1]$ if $\SA[i] \ne 0$, and $\BWT[i] = s[n{-}1]$ otherwise. All the arrays $\SA$, $\ISA$, $\LCP$, $\plcp$ and the string $\BWT$ can be built from $T$ in $\Oh(n)$ time. Some of these structures are depicted in Figure~\ref{fig:suffix-tree}.

\begin{figure}[htb]
\centering
\begin{subfigure}[t]{.55\textwidth}
\includegraphics[scale=1]{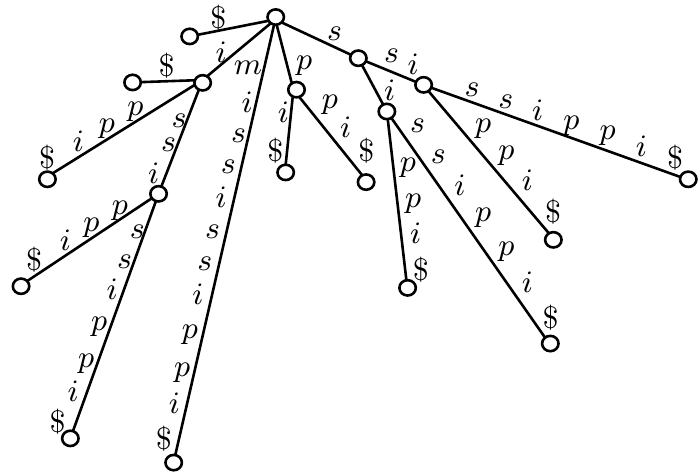}
\end{subfigure}
\hfill
\begin{subtable}[t]{.42\textwidth}
\small
\vskip-50mm
\begin{tabular}{crrrl}
$i$ & $\LCP$ & $\SA$  & $\BWT$ & sorted suffixes \\\cline{1-5}
0 & \textbf 0 & 11 & $i$ & $\$$    \\
1 & \textbf 0 & 10 & $p$ & $i\$$    \\
2 & \textbf 1 & 7  & $s$ & $ippi\$$   \\
3 &     1 & 4  & $s$ & $issippi\$$   \\
4 & \textbf 4 & 1  & $m$ & $ississippi\$$  \\
5 & \textbf 0 & 0  & $\$$ & $mississippi\$$ \\
6 & \textbf 0 & 9  & $p$ & $pi\$$    \\
7 & \textbf 1 & 8  & $i$ & $ppi\$$   \\
8 & \textbf 0 & 6  & $s$ & $sippi\$$  \\
9 &     2 & 3  & $s$ & $sissippi\$$   \\
10 & \textbf 1 & 5  & $i$ & $ssippi\$$ \\
11 &     3 & 2  & $i$ & $ssissippi\$$  \\
\end{tabular}
\end{subtable}
\caption{The suffix tree $T$, $\LCP$, $\SA$, and $\BWT$ of the string $s = mississippi\$$. All irreducible LCP values are in bold and their sum is $7$; all positions of $s$ except $2, 3, 4$ are irreducible.}
\label{fig:suffix-tree}
\end{figure}

\section{Queries at Irreducible Positions}
\label{sec:queries-at-irreducible}

An LCP value $\LCP[i]$ is called \emph{irreducible} if either $i = 0$ or $\BWT[i{-}1] \ne \BWT[i]$. In other words, the irreducible LCP values are those values of $\LCP[0..n{-}1]$ that correspond to the first positions in the runs of $\BWT$ (e.g., all values $\LCP[i]$ except for $i = 3,9,11$ in Figure~\ref{fig:suffix-tree}). The central combinatorial fact that is at the core of our construction is the following surprising lemma proved by K{\"a}rkk{\"a}inen et~al.~\cite{KarkkainenManziniPuglisi}.

\begin{lemma}[see \cite{KarkkainenManziniPuglisi}]
The sum of all irreducible LCP values is at most $2 n\log n$.
\label{lem:irreducible-lcp}
\end{lemma}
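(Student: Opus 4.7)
The plan is to transfer the sum to text-position indexing via the PLCP array and then bound coverage per text position. Since $\plcp[p] = \LCP[\ISA[p]]$, the sum of all irreducible LCP values equals $\sum_{p \in P} \plcp[p]$, where $P \subseteq [0..n{-}1]$ is the set of text positions $p$ for which the entry $\LCP[\ISA[p]]$ is irreducible. Assigning to each $p \in P$ the interval $I_p := \{p, p+1, \ldots, p + \plcp[p] - 1\}$ of text positions and swapping the order of summation yields
\[
  \sum_{p \in P} \plcp[p] \;=\; \sum_{r=0}^{n-1} \bigl|\{\,p \in P : r \in I_p\,\}\bigr|,
\]
so it suffices to prove that every text position $r$ lies in at most $2\log n$ intervals $I_p$.

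Fix $r$ and enumerate the irreducible positions whose intervals cover $r$ as $p_1 < p_2 < \cdots < p_k$. Each $p_j$ carries a lexicographic-predecessor witness $q_j := \SA[\ISA[p_j] - 1]$ satisfying $\lcp(p_j, q_j) = \plcp[p_j] > r - p_j$ and $s[p_j - 1] \ne s[q_j - 1]$ by irreducibility (assuming $p_j > 0$; the case $\ISA[p_j] = 0$ contributes $\plcp[p_j] = 0$ and may be ignored). Thus the substring $u_j := s[p_j..r]$ occurs at both $p_j$ and $q_j$ with distinct preceding characters, and $u_1 \supsetneq u_2 \supsetneq \cdots \supsetneq u_k$ form a chain of nested suffixes of the fixed string $s[p_1..r]$.

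The crux is to show $r - p_{j+1} \le (r - p_j)/2$, which immediately gives $k \le \log_2 n + 1$ and hence the claimed bound. The heuristic is that, because $\plcp[p{+}1] \ge \plcp[p] - 1$, the shifted position $q_j + 1$ is a natural candidate witness for $p_j + 1$, but it shares the preceding character $s[p_j] = s[q_j]$ with $p_j + 1$ (these characters lie inside $u_j$, where the two occurrences agree); so $p_j + 1$ can be irreducible only if its true witness $q(p_j{+}1)$ is a genuinely new lexicographic neighbor distinct from $q_j + 1$. Propagating this observation along $p_j, p_j + 1, \ldots, p_{j+1}$ and then comparing the three occurrences of $u_{j+1}$ at $p_{j+1}$, at $q_{j+1}$, and at the shifted position $q_j + (p_{j+1} - p_j)$ produces, under the assumption $p_{j+1} - p_j \le (r - p_j)/2$, a long overlap inside $s$ with period $p_{j+1} - p_j$. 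A Fine--Wilf-style argument applied to this overlap forces $s[p_{j+1} - 1] = s[q_{j+1} - 1]$, contradicting irreducibility at $p_{j+1}$.

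The main obstacle I anticipate is the periodicity step itself: one has to carefully track how $q_{j+1}$ relates to $q_j + (p_{j+1} - p_j)$ and extract a period strictly shorter than $|u_{j+1}|$ that is compatible with all three occurrences but incompatible with the distinct preceding-characters constraint. Once the doubling lemma is established, summing the bound $k \le 2\log n$ over all $n$ text positions $r$ yields $\sum_{p \in P} \plcp[p] \le 2n\log n$, matching the statement of the lemma.
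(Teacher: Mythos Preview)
The paper does not prove this lemma—it merely cites K\"arkk\"ainen, Manzini, and Puglisi—so there is no in-paper argument to compare against. Your reformulation via intervals $I_p$ and the swapped double sum is correct, but the plan collapses at the next step: neither the halving inequality $r - p_{j+1} \le (r - p_j)/2$ nor even the weaker claim that each text position lies in at most $O(\log n)$ irreducible intervals is true.

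For a concrete counterexample to halving, take
\[
s \;=\; abcdefghij \cdot r \cdot abcdefghij \cdot p \cdot z\,bcdefghij \cdot q \cdot \$
\]
with alphabet order $\cdots < p < q < r < \cdots$ and $z \ne a$. Position $0$ is irreducible with $\plcp[0]=10$ (its predecessor suffix starts at $11$), and position $1$ is irreducible with $\plcp[1]=9$: the $bcdefghij$-suffixes sit in the order $12 < 23 < 1$ (next characters $p<q<r$), so $q_1 = 23$, preceded by $z\ne a$. Both intervals cover the text position $9$, yet $9-p_2 = 8 > 4.5 = (9-p_1)/2$. Your Fine--Wilf heuristic cannot rescue this: the three occurrences of $u_2 = bcdefghij$ (at $1$, $12$, $23$) lie in disjoint blocks of $s$, so there is no overlap and no period is forced; the hoped-for relation between $q_{j+1}$ and $q_j + (p_{j+1}-p_j)$ simply does not hold. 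Worse, the construction iterates: inserting for each $i\in[1..k{-}1]$ a block $y_i\,s[i..k{-}1]\,\xi_i$ with $y_i \ne s[i{-}1]$ and separators satisfying $p < \xi_1 < \cdots < \xi_{k-1} < r$ makes every position $0,\ldots,k{-}1$ irreducible with $\plcp[i]=k-i$. Then text position $k{-}1$ is covered by $k$ irreducible intervals while $n=\Theta(k^2)$, so the coverage is $\Theta(\sqrt{n})$; the per-position $O(\log n)$ bound is therefore false as well, and the approach cannot be completed along these lines. The proof in the cited paper proceeds by a different, global charging argument rather than by bounding coverage pointwise.
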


We say that a position $p$ of the string $s$ is \emph{irreducible} if $\LCP[\ISA[p]]$ (${=}\PLCP[p]$) is an irreducible LCP value.

Consider a query that asks to find the locus of a substring $s[p..q]$ such that $p$ is an irreducible position. Denote $\ell = q - p + 1$. We first locate in $\Oh(1)$ time the leaf $v$ corresponding to $s[p..n{-}1]$ using a precomputed array. As was discussed above, the problem is reduced to the counting of left-branching nodes $u$ on the $v$--root path such that $|\str(u)| \ge \ell$. We will construct for $p$ a bit array $b_p[0..\ell_p{-}1]$ of length $\ell_p = \PLCP[p]$ such that, for any $d$, $b_p[d] = 1$ iff there is a left-branching node with string depth $d$ on the $v$--root path. Since the suffix $s[p..n{-}1]$ and its lexicographical predecessor among all suffixes (if any, i.e., if $\ISA[p] \ne 0$) have the longest common prefix of length $\PLCP[p]$, the lowest left-branching node on the $v$--root path has string depth $\PLCP[p]$. Therefore, the number of left-branching nodes $u$ on the $v$--root path such that $|\str(u)| \ge \ell$ is equal to the number of ones in the subarray $b_p[\ell..\ell_p{-}1]$ plus~$1$. The bit counting query is answered using the following \emph{rank data structure}.

\begin{lemma}[see \cite{Clark,Jacobson}]
For any bit array $b[0..m{-}1]$ packed into $\Oh(m / w)$ $w$-bit machine words, one can construct in $\Oh(m / w)$ time a rank data structure that can count the number of ones in any range $b[p..q]$ in $\Oh(1)$ time, provided a table computable in $o(2^w)$ time and independent of the array has been precomputed.\label{lem:rank-select}
\end{lemma}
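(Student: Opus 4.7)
The plan is to use the classical two-level rank structure of Jacobson and Clark, adapted so that no $\Theta(2^w)$-size table is needed. I would partition $b$ into \emph{superblocks} of size $w^2$ and, within each superblock, into \emph{blocks} of size $w$, so that each block fits in exactly one machine word. At every superblock boundary I would store the number of ones in the prefix of $b$ ending there, packed as a single machine word; this uses $\Oh(m/w^2)$ words. At every block boundary inside a superblock I would store the number of ones in the prefix of \emph{that} superblock ending at the block. Since such a count is bounded by $w^2$, it fits in $2\lceil\log w\rceil$ bits, so the packed array of block sums occupies $\Oh((m/w)\log w / w) = \Oh(m/w)$ words in total.

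To handle the fractional part inside a single word, I would precompute a popcount table $P$ mapping every $(w/2)$-bit value to its number of ones. The table has $2^{w/2}$ entries of $\Theta(\log w)$ bits each, depends only on $w$ (not on $b$), and is built in $\Oh(2^{w/2}\cdot w) = o(2^w)$ time by a direct scan. With $P$, the popcount of any $w$-bit word (or of a masked prefix of it) is two table lookups plus one addition. A prefix rank $\mathrm{rank}(b[0..i])$ then equals (superblock sum) $+$ (block sum) $+$ (popcount of the suitably masked word containing position $i$), and the range query on $b[p..q]$ reduces to $\mathrm{rank}(b[0..q]) - \mathrm{rank}(b[0..p{-}1])$, answered in $\Oh(1)$ time.

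For construction I would make a single left-to-right scan through the words of $b$, maintaining two running counters (global and superblock-relative) and writing the superblock and block sums at the appropriate boundaries; each of the $\Oh(m/w)$ words is touched $\Oh(1)$ times. The one delicate point is that the $\Oh(\log w)$-bit block sums have to be read from and written into their packed array using $\Oh(1)$ word-aligned shifts and masks, which is the kind of routine bit-packing that needs to be set up carefully but involves no new ideas. Beyond that, I do not anticipate any genuine obstacle.
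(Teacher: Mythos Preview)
Your proposal is correct and is precisely the classical Jacobson--Clark two-level rank construction that the paper is citing; the paper does not give its own proof of this lemma but simply invokes \cite{Clark,Jacobson}, so there is nothing further to compare.
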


By Lemma~\ref{lem:irreducible-lcp}, the total length of all arrays $b_p$ for all irreducible positions $p$ of $s$ is $\Oh(n\log n)$. Therefore, one can construct the rank data structures of Lemma~\ref{lem:rank-select} for them in $\Oh(n)$ overall time provided $w = \lfloor\log n\rfloor$. It remains to build the arrays $b_p$ themselves.

To this end, we traverse the suffix tree $T$ in depth first (i.e., lexicographical) order, maintaining along the way a bit array $b$ of length $n$ such that, when we are in a node $v$, the subarray $b[0..|\str(v)|{-}1]$ marks the string depths of all left-branching ancestors of $v$, i.e., for $0 \le d < |\str(v)|$, we have $b[d] = 1$ iff there is a left-branching node with string depth $d$ on the $v$--root path. The array is stored in a packed form in $\lfloor\log n\rfloor$-bit machine words. For each node $v$, we consider its outgoing edges. Each pair of adjacent edges (in the lexicographical order of their labels) corresponds to a unique value $\LCP[i]$ such that $s[\SA[i]..n{-}1]$ is the suffix corresponding to the leftmost (lexicographically smallest) leaf of the subtree connected to the larger one of the two edge labels. We check whether the value $\LCP[i]$ is irreducible comparing $\BWT[i{-}1]$ and $\BWT[i]$ and, if so, store the subarray $b[0..|\str(v)|{-}1]$ into $b_{\SA[i]}$ in $\Oh(1 + |\str(v)| / \log n)$ time. As we touch in this way every value $\LCP[i]$ only once, the overall time is $\Oh(n)$ by the same argument using Lemma~\ref{lem:irreducible-lcp}.

\section{Reduction to Irreducible Positions}
\label{sec:reduction-to-irreducible}

Consider a query for the locus of a substring $s[p..q]$ such that the position $p$ is not irreducible. Let $v$ be the leaf corresponding to $s[p..n{-}1]$. As was discussed in Section~\ref{sec:prelim}, to answer the query, we have to count the number of left-branching nodes with string depths at least $q - p + 1$ on the $v$--root path. As a first approach, we do the following precalculations for this.

We build in $\Oh(n)$ time on the suffix tree $T$ a data structure that allows us to find the lowest common ancestor for any pair of nodes in $\Oh(1)$ time~\cite{BenderFarachColton2,HarelTarjan}. In one tree traversal, we precompute in each node $u$ the number of left-branching nodes on the $u$--root path. We create two arrays $R_{\le}[0..n{-}1]$ and $R_{\ge}[0..n{-}1]$ such that, for any $i$, $R_{\le}[i] = \max\{j \le i \colon j = 0\text{ or }\BWT[j{-1}] \ne \BWT[j]\}$ and $R_{\ge}[i] = \min\{j \ge i \colon j = 0\text{ or } \BWT[j{-1}] \ne \BWT[j]\}$ ($R_{\ge}[i]$ is undefined if there is no such $j$), i.e., $R_{\le}[i]$ and $R_{\ge}[i]$ are indexes of irreducible LCP values, respectively, preceding and succeeding $\LCP[i]$. Thus, for any suffix $s[p..n{-}1]$, the suffixes $s[\SA[R_{\le}[\ISA[p]]]..n{-}1]$ and $s[\SA[R_{\ge}[\ISA[p]]]..n{-}1]$ are, respectively, the closest lexicographical predecessor and successor of $s[p..n{-}1]$ with irreducible starting position. The \emph{closest irreducible lexicographical neighbour of $s[p..n{-}1]$} is that suffix $s[r..n{-}1]$ among these two that has longer common prefix with $s[p..n{-}1]$, i.e., $r = \SA[R_{\le}[\ISA[p]]]$ if either $\lcp(\SA[R_{\le}[\ISA[p]]], p) \ge \lcp(\SA[R_{\ge}[\ISA[p]]], p)$ or $R_{\ge}[\ISA[p]]$ is undefined, and $r = \SA[R_{\ge}[\ISA[p]]]$ otherwise. Since $\lcp(t, p)$ can be calculated in $\Oh(1)$ time for any positions $t$ and $p$ by one lowest common ancestor query on their corresponding leaves, the closest irreducible lexicographical neighbour can be computed in $\Oh(1)$ time.

Now, consider a query for the locus of $s[p..q]$ with non-irreducible $p$. We first find in $\Oh(1)$ time the closest irreducible lexicographical neighbour $s[r..n{-}1]$ for $s[p .. n{-}1]$. Let $v'$ and $v$ be the leaves corresponding to $s[r..n{-}1]$ and $s[p..n{-}1]$, respectively. We compute in $\Oh(1)$ time the lowest common ancestor $u$ of $v'$ and $v$. Thus, $|\str(u)| = \lcp(r, p)$. Using the number of left-branching nodes on the $v$--root and $u$--root paths that were precomputed in $v$ and $u$, we calculate in $\Oh(1)$ time the number $k$ of left-branching nodes on the $v$--root path that lie between the nodes $v$ and $u$ (inclusively).

Denote $\ell = q - p + 1$. If $\ell \le |\str(u)|$, then we can count in the $v$--root path the number of left-branching nodes $w$ such that $|\str(w)| \ge \ell$ as follows. The number of nodes $w$ such that $|\str(w)| \ge |\str(u)|$ is equal to $k$. The number of nodes $w$ such that $\ell \le |\str(w)| < |\str(u)|$ can be found by counting the number of ones in the subarray $b_r[\ell..|\str(u)|{-}1]$ of the bit array $b_r$ associated with the irreducible position $r$ (assuming that all values $b_r[t]$ with $t \ge \ell_r$ are zeros in case the length $\ell_r$ of $b_r$ is less than $|\str(u)|$), which can be performed in $\Oh(1)$ time by Lemma~\ref{lem:rank-select}. Thus, the problem is solved for the case $\ell \le |\str(u)|$.

To address the case $\ell > |\str(u)|$, we have to develop more sophisticated techniques that allow us to reduce the counting of left-branching nodes on the $v$--root path to counting on a different leaf--root path with a different threshold $\ell'$ that meets the condition $\ell' \le |\str(u')|$, for an analogously appropriately defined node $u'$. The remainder of the text describes the reduction.

\medskip

Similar to irreducible positions, let us define, for each non-irreducible position $p$, a number $\ell_p = \PLCP[p]$ and an array $b_p[0..\ell_p{-}1]$ such that, for any $d$, $b_p[d] = 1$ iff there is a left-branching node of string depth $d$ on the path from the leaf corresponding to $s[p..n{-}1]$ to the root. We do not actually store the arrays $b_p$ and use them only in the analysis.

\begin{lemma}
For any non-irreducible position $p$, we have $\ell_{p-1} = \ell_p + 1$ and, if $b_{p-1}[d{+}1] = 1$ for some $d \ge 0$, then $b_p[d] = 1$.\label{lem:inherent-arrays}
\end{lemma}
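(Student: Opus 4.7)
The plan is to prove the two assertions in sequence. Throughout, set $i = \ISA[p]$ and $q = \SA[i{-}1]$, so that $s[q..n{-}1]$ is the immediate lexicographic predecessor of $s[p..n{-}1]$ and $\lcp(q, p) = \ell_p$. Non-irreducibility of $p$ means $i \ge 1$ and $\BWT[i{-}1] = \BWT[i]$, i.e., $s[q{-}1] = s[p{-}1]$. I will also use the max-LCP characterization: for any position $r$ with $s[r..n{-}1]$ lexicographically smaller than $s[r'..n{-}1]$, we have $\lcp(r, r') \le \PLCP[r']$; this follows from the standard fact that the LCP of two suffixes equals the minimum of $\LCP$-values along the range between their positions in $\SA$, and in particular is bounded by the $\LCP$-value at the larger end.

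For the equality $\ell_{p-1} = \ell_p + 1$, I would establish matching bounds. For the lower bound, prepending the common character $s[q{-}1] = s[p{-}1]$ to the common prefix of length $\ell_p$ shared by $s[q..n{-}1]$ and $s[p..n{-}1]$ shows that $s[q{-}1..n{-}1]$ and $s[p{-}1..n{-}1]$ share a prefix of length $\ell_p + 1$; furthermore $s[q{-}1..n{-}1]$ is lex-smaller than $s[p{-}1..n{-}1]$ because the original inequality $s[q..n{-}1] < s[p..n{-}1]$ is preserved when the same character is prepended. The max-LCP characterization then yields $\ell_{p-1} \ge \ell_p + 1$. For the upper bound, apply the symmetric ``strip the first character'' argument to the pair consisting of $s[p{-}1..n{-}1]$ and its own lex predecessor: cancelling the common leading character produces a suffix lex-smaller than $s[p..n{-}1]$ that shares $\ell_{p-1} - 1$ characters with it, hence $\ell_p \ge \ell_{p-1} - 1$ by the same characterization.

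For the second claim, unpacking $b_{p-1}[d{+}1] = 1$ gives a position $r$ with $s[r..n{-}1]$ lex-smaller than $s[p{-}1..n{-}1]$ and $\lcp(r, p{-}1) = d + 1$. Since $d + 1 \ge 1$, we have $s[r] = s[p{-}1]$, and the sentinel ensures $r + d + 1 \le n - 1$, so $r + 1$ is a valid position. Stripping the common leading character preserves the strict lex order, yielding $s[r{+}1..n{-}1] < s[p..n{-}1]$ and $\lcp(r{+}1, p) = d$. This witnesses a left-branching node of string depth exactly $d$ on the leaf-to-root path of $s[p..n{-}1]$ (namely, the lowest common ancestor of the leaves for $s[p..n{-}1]$ and $s[r{+}1..n{-}1]$), so $b_p[d] = 1$.

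The main obstacle is not conceptual but notational: keeping straight which suffix is lex-smaller than which, and checking that all positional indices remain in bounds once a character is stripped or prepended. The only non-trivial ingredient is the max-LCP characterization of $\PLCP$, but both bounds in the first part and the whole second part are then direct translations of the non-irreducibility identity $s[q{-}1] = s[p{-}1]$ into the language of suffixes.
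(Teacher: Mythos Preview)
Your argument is correct. For the second assertion it is essentially identical to the paper's proof: both take a witness $s[r..n{-}1]$ for $b_{p-1}[d{+}1]=1$, strip the shared first letter, and conclude $b_p[d]=1$. You are slightly more careful in checking that the strict order is preserved after stripping and that $r+1$ is a valid position; the paper leaves these routine verifications implicit.

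For the first assertion your route differs from the paper's. The paper argues directly that if $q=\SA[\ISA[p]-1]$ and $s[q{-}1]=s[p{-}1]$, then $s[q{-}1..n{-}1]$ is \emph{the} immediate lexicographic predecessor of $s[p{-}1..n{-}1]$ (this is the LF-mapping/rank argument for the BWT), whence $\ell_{p-1}=\lcp(q{-}1,p{-}1)=\ell_p+1$ in one line. You instead sandwich $\ell_{p-1}$: the pair $(q{-}1,p{-}1)$ gives $\ell_{p-1}\ge \ell_p+1$ via the max-LCP characterisation of $\PLCP$, and stripping a character from the (unnamed) immediate predecessor of $s[p{-}1..n{-}1]$ gives $\ell_p\ge \ell_{p-1}-1$. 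Your approach avoids having to justify that $q{-}1$ occupies the exact rank $\ISA[p{-}1]-1$, at the price of invoking the range-minimum description of LCP and a second inequality. The paper's approach is shorter and yields the stronger structural fact $\ISA[q{-}1]=\ISA[p{-}1]-1$, which is sometimes useful elsewhere; your approach is more self-contained if one has not set up the LF-mapping.
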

\begin{proof}
Let $s[t..n{-}1]$ be the suffix lexicographically preceding $s[p..n{-}1]$, i.e., $t = \SA[\ISA[p]{-}1]$ and $\ell_p = \lcp(t, p)$. Since $p$ is not irreducible, we have $\BWT[\ISA[t]] = \BWT[\ISA[p]]$, i.e., $s[t{-}1] = s[p{-}1]$. Hence, $s[t{-}1..n{-}1]$ is the suffix lexicographically preceding $s[p{-}1..n{-}1]$: $\ISA[t{-}1] = \ISA[p{-}1] - 1$. Therefore, $\ell_{p-1} = \lcp(t - 1, p - 1)$, which is equal to $\lcp(t, p) + 1 = \ell_p + 1$.

If $b_{p-1}[d{+}1] = 1$ for $d \ge 0$, then there is a suffix $s[r..n{-}1]$ that is smaller than $s[p{-}1..n{-}1]$ and $\lcp(r, p - 1) = d + 1$. Then, $\lcp(r + 1, p) = d$, which implies that $b_p[d] = 1$.
\end{proof}

Consider two consecutive irreducible positions $p'$ and $p''$ in $s$ (i.e., all positions $r$ between $p'$ and $p''$ are not irreducible). Lemma~\ref{lem:inherent-arrays} states that the arrays $b_{p'}, b_{p'+1}, \ldots, b_{p''-1}$ form a trapezoidal structure as depicted in Figure~\ref{fig:trapezoid} in which each $1$ value is inherited by all arrays above it while it fits in their range.

\begin{figure}[htb]
\centering
\includegraphics[scale=1]{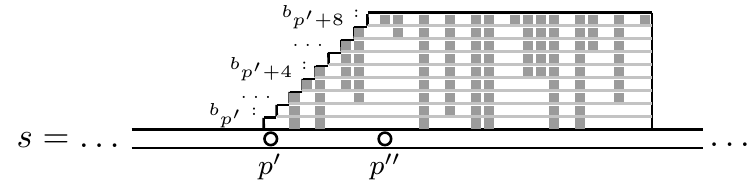}
\caption{Here, $p'$ and $p''$ are consecutive irreducible positions. Each line of the trapezoid depicts an array $b_{p}[0..\ell_{p}{-}1]$, for $p \in [p'..p''{-}1]$, in which gray and white positions signify, respectively, ones and zeros.}\label{fig:trapezoid}
\end{figure}

The query for the locus of $s[p..q]$ was essentially reduced to the counting of ones in the subarray $b_p[\ell..\ell_p{-}1]$, where $\ell = q - p + 1$. The problem now is that the array $b_p$ is not stored explicitly since the position $p$ is not irreducible. Denote the length of $b_p[\ell..\ell_p{-}1]$ by $m = \ell_p - \ell$, so that $b_p[\ell..\ell_p{-}1] = b_p[\ell_p{-}m..\ell_p{-}1]$, which is a more convenient notation for what follows. Let $p'$ be the closest irreducible position preceding $p$, i.e., $p' = \max\{r \le p \colon r\text{ is irreducible}\}$. If we are lucky and neither of the arrays $b_{p'+1}, b_{p'+2}, \ldots, b_p$ introduces new $1$ values in the subarray $b_{p'}[\ell_{p'}{-}m..\ell_{p'}{-}1]$ (in other words if $b_{p'}[\ell_{p'}{-}m..\ell_{p'}{-}1] = b_p[\ell_p{-}m..\ell_p{-}1]$), then we can simply count the number of ones in $b_{p'}[\ell_{p'}{-}m..\ell_{p'}{-}1]$ in $\Oh(1)$ time using Lemma~\ref{lem:rank-select} and, thus, solve the problem. Unfortunately, new $1$ values indeed could be introduced. But, as it turns out, such new values are, in a sense, ``covered'' by other arrays $b_r$ at some irreducible positions $r$ as the following lemma suggests.

\begin{lemma}
Let $p$ be a non-irreducible position and $s[r..n{-}1]$ be the closest irreducible lexicographical neighbour of $s[p..n{-}1]$. Denote $c_p = \lcp(r,p)$. If, for some $d$, we have $b_p[d] = 1$ but $b_{p-1}[d{+}1] = 0$, then $d \le c_p$.\label{lem:new-branch}
\end{lemma}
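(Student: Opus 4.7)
The plan is to unfold the hypothesis $b_p[d] = 1$ into a concrete suffix-witness and then locate that witness in the suffix array using the BWT-run structure around $\ISA[p]$.

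First, by definition of $b_p[d]$ there exists a position $x$ with $s[x..n{-}1]$ lexicographically smaller than $s[p..n{-}1]$ and $\lcp(x, p) = d$. I claim any such witness must satisfy $x = 0$ or $s[x{-}1] \ne s[p{-}1]$: otherwise $s[x{-}1..n{-}1] < s[p{-}1..n{-}1]$ with $\lcp(x{-}1, p{-}1) = d+1$, contradicting $b_{p-1}[d+1] = 0$. Equivalently, $\BWT[\ISA[x]] \ne s[p{-}1]$ (the boundary case $x = 0$ is covered since $\BWT[\ISA[0]] = s[n{-}1] = \$$ differs from $s[p{-}1]$).

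Next I exploit that $p$ is non-irreducible. Setting $i := \ISA[p]$, we have $\BWT[i{-}1] = \BWT[i] = s[p{-}1]$, so $i$ lies strictly inside a maximal $\BWT$-run of character $s[p{-}1]$. Let $i_0$ be the first index of this run; then $\LCP[i_0]$ is irreducible while no $\LCP[j]$ with $j \in (i_0..i]$ is, so $r_0 := \SA[i_0]$ is the closest irreducible lex-predecessor of $s[p..n{-}1]$. Since $\BWT[j] = s[p{-}1]$ for every $j \in [i_0..i]$ whereas $\BWT[\ISA[x]] \ne s[p{-}1]$, the witness $x$ must satisfy $\ISA[x] < i_0$.

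To finish I invoke the standard identity $\lcp(\SA[j], p) = \min \LCP[j{+}1..i]$ for $j < i$, which is non-decreasing in $j$. Applying it at $j = \ISA[x]$ and $j = i_0$ yields $d = \lcp(x, p) \le \lcp(r_0, p) \le c_p$: the last inequality holds because $r$ is selected between the predecessor and successor irreducible neighbours precisely to maximize the LCP with $p$, so $c_p \ge \lcp(r_0, p)$. The main conceptual step is the middle one, turning the condition $b_{p-1}[d+1] = 0$ into the statement "the preceding character of every witness differs from $s[p{-}1]$" and combining it with the $\BWT$-run structure forced by the non-irreducibility of $p$; together they push every witness outside the run and thereby cap its LCP with $p$ by $\lcp(r_0, p)$.
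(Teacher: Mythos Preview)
Your proof is correct and follows essentially the same approach as the paper's: both pick a witness for $b_p[d]=1$, use $b_{p-1}[d{+}1]=0$ to force its preceding character to differ from $s[p{-}1]$ (i.e., $\BWT[\ISA[x]]\ne\BWT[\ISA[p]]$), and then observe that some irreducible position must lie between $\ISA[x]$ and $\ISA[p]$ in the suffix array, yielding $d\le c_p$ by monotonicity of $\lcp$ along the suffix array and the maximality defining $r$. The only cosmetic difference is that you explicitly name the run boundary $i_0$ and the specific irreducible predecessor $r_0=\SA[i_0]$, whereas the paper just asserts the existence of some irreducible $r'$ with $\ISA[t]<\ISA[r']<\ISA[p]$.
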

\begin{proof}
The condition $b_p[d] = 1$ implies that there is a suffix $s[t..n{-}1]$ that is smaller than $s[p..n{-}1]$ and $\lcp(t, p) = d$. If $\BWT[\ISA[t]] = \BWT[\ISA[p]]$, then $s[t{-}1] = s[p{-}1]$ and, hence, the suffix $s[t{-}1..n{-}1]$ is smaller than $s[p{-}1..n{-}1]$ and $\lcp(t - 1, p - 1) = d + 1$. This gives $b_{p-1}[d{+}1] = 1$, which contradicts the equality $b_{p-1}[d{+}1] = 0$. Hence, $\BWT[\ISA[t]] \ne \BWT[\ISA[p]]$. Thus, at least one of the values $\LCP[\ISA[t]{+}1], \LCP[\ISA[t]{+}2], \ldots, \LCP[\ISA[p]{-}1]$ must be irreducible (note that $\ISA[t] + 1 < \ISA[p]$ since otherwise $p$ would be irreducible). Let $r'$ be an irreducible position such that $\ISA[t] < \ISA[r'] < \ISA[p]$. We obtain $\lcp(r, p) \ge d$ since $\lcp(r', p) \ge \lcp(t, p) = d$ and, for any irreducible $r''$ (in particular, for $r'$), we have $\lcp(r, p) \ge \lcp(r'', p)$.
\end{proof}

Observe that $c_p$ in Lemma~\ref{lem:new-branch} is equal to $|\str(u)|$, where $u$ is the lowest common ancestor of the leaves $v'$ and $v$ corresponding to $s[r..n{-}1]$ and $s[p..n{-}1]$. Thus, the numbers $c_p$ can be precomputed for all non-irreducible positions $p$ in $\Oh(n)$ time using lowest common ancestor queries and the arrays $R_{\le}$ and $R_{\ge}$ described in the beginning of the present section. For irreducible positions $p'$, we put $c_{p'} = \ell_{p'}$ by definition. To illustrate Lemma~\ref{lem:new-branch} and its consequences, we depict in Figure~\ref{fig:trapezoid-shaded} the same trapezoidal structure as in Figure~\ref{fig:trapezoid} coloring in each $b_p$ a prefix of length $c_p + 1$ (and also depicting the range $b_p[\ell_p{-}m .. \ell_p{-}1]$).

\begin{figure}[htb]
\centering
\includegraphics[scale=1]{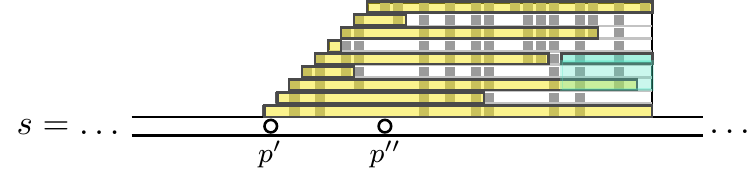}
\caption{The yellow stripe in each array $b_p$ is a prefix of length $\min\{c_p + 1, \ell_p\}$. Note that, according to Lemma~\ref{lem:new-branch}, each gray position whose corresponding position below is white is covered in yellow. The blue stripe is $b_p[\ell_p{-}m .. \ell_p{-}1]$, where $p = p' + 4$. It coincides with two corresponding regions below it that are in light blue. Here we have $t = p - 2$.}\label{fig:trapezoid-shaded}
\end{figure}

The problematic condition $\ell > |\str(u)|$ that we consider can be reformulated as $\ell_p - m > c_p$. It follows from Lemmas~\ref{lem:inherent-arrays} and~\ref{lem:new-branch} that in this case $b_p[\ell_p{-}m..\ell_p{-}1] = b_{p-1}[\ell_{p-1}{-}m..\ell_{p-1}{-}1]$. Let $t$ be the first position preceding $p$ such that $\ell_{t} - m \le c_{t}$. Note that $t \ge p'$ since $c_{p'} = \ell_{p'}$ by definition and, thus, $\ell_{p'} - m \le c_{p'}$. Then, applying Lemma~\ref{lem:new-branch} consecutively to all positions $p, p-1, \ldots, t + 1$, we conclude that $b_{t}[\ell_{t}{-}m .. \ell_{t}{-}1] = b_p[\ell_p{-}m .. \ell_p{-}1]$.

Informally, in terms of Figure~\ref{fig:trapezoid-shaded}, the searching of $t$ corresponds to the moving of the range $b_p[\ell_p{-}m .. \ell_p{-}1]$ down until we encounter an ``obstacle'', a colored part of $b_{t}$ of length $c_{t} + 1$. The specific algorithm finding $t$ is discussed in Section~\ref{sec:special-weighted-ancestors}; let us assume for the time being that $t$ is already known. Then, Lemma~\ref{lem:new-branch} implies that all the ranges $b_r[\ell_r{-}m .. \ell_r{-}1]$ with $r \in [t..p]$ coincide and, thus, the whole problem was reduced to the counting of ones in the subarray $b_t[\ell_{t}{-}m .. \ell_{t}{-}1]$. But since $\ell_{t} - m \le c_{t}$, the problem can be solved by the method described at the beginning of the section: we find an irreducible position $r$ such that $\lcp(r, t) = c_{t}$, then locate the leaves $v'$ and $v''$ corresponding to $s[r .. n{-}1]$ and $s[t .. n{-}1]$, respectively, find the lowest common ancestor $u'$ of $v'$ and $v''$, and separately count the number of left-branching nodes $w$ on the $v''$--root path such that $|\str(w)| \ge c_{t}$ and such that $\ell_{t} - m \le |\str(w)| < c_{t}$.

Let us briefly recap the reductions described above: the searching for the locus of $s[p..q]$ was essentially reduced to the counting of ones in the subarray $b_p[\ell_p{-}m .. \ell_p{-}1]$, where $m = \ell_p - (q - p + 1)$, for which we first compute the position $t = \max\{t \le p \colon \ell_t - m \le c_t\}$ (this is discussed in Section~\ref{sec:special-weighted-ancestors}) and, then, count the number of ones in the subarray $b_t[\ell_t{-}m .. \ell_t{-}1]$ (the subarray coincides with $b_p[\ell_p{-}m .. \ell_p{-}1]$ by Lemmas~\ref{lem:inherent-arrays} and~\ref{lem:new-branch}) using lowest common ancestor queries, the arrays $R_{\le}$ and $R_{\ge}$, and some precomputed numbers in the nodes.

\section{Reduction to Special Weighted Ancestors}
\label{sec:special-weighted-ancestors}

We are to compute $t = \max\{t \le p \colon \ell_t - m \le c_t\}$, for a non-irreducible position $p$. As is seen in Figure~\ref{fig:trapezoid-shaded}, this is a kind of geometric ``orthogonal range predecessor'' problem.

For each irreducible position $p'$ in $s$, we build a tree $I_{p'}$ with $p'' - p'$ nodes, where $p''$ is the next irreducible position after $p'$ (so that all $r$ with $p' < r < p''$ are not irreducible): Each node of $I_{p'}$ is associated with a unique position from $[p'..p''{-}1]$ and the root is associated with $p'$. A node associated with position $r \ne p'$ has weight $w_r = r - p' + c_r$; the root has weight $w_{p'} = +\infty$. The parent of a non-root node associated with $r$ is the node associated with the position $r' = \max\{r' < r \colon w_{r'} > w_r\}$. Thus, the weights strictly increase as one ascends to the root. The tree $I_{p'}$ is easier to explain in terms of the trapezoidal structure drawn in Figure~\ref{fig:trapezoid-tree}: the parent of a node associated with $r$ is its closest predecessor $r'$ whose colored range $b_{r'}[r'..r'{+}c_{r'}{-}1]$ goes at least one position farther to the right than $b_{r}[r..r{+}c_{r}{-}1]$. By Lemma~\ref{lem:inherent-arrays}, $r - p' + \ell_r = \ell_{p'}$ and, hence, each weight $w_r$ such that $c_r \le \ell_r$ is upperbounded by $\ell_{p'}$. However, it may happen that $c_r > \ell_r$ and, thus, $w_r > \ell_{p'}$. In this case, the colored range $b_r[r..r{+}c_{r}{-}1]$ stretches beyond the length $\ell_r$ of $b_r$ (for simplicity, Figure~\ref{fig:trapezoid-tree} does not have such cases). Such large weights are a source of technical complications in our scheme.



\begin{figure}[htb]
\centering
\includegraphics[scale=1]{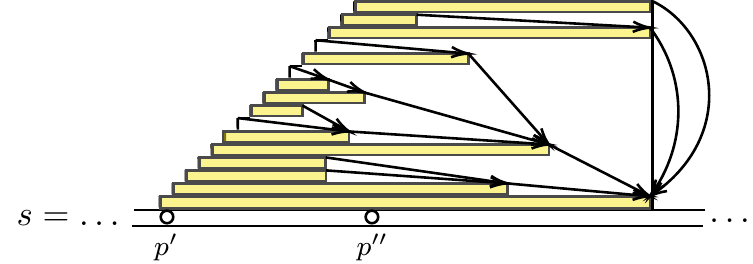}
\caption{Each stripe of the trapezoid depicts an array $b_r$, for $r \in [p'..p''{-}1]$, and its prefix of length $c_r$ (possibly empty) is colored in yellow (not $\min\{c_r + 1, \ell_r\}$ like in Figure~\ref{fig:trapezoid-shaded}). Each yellow prefix corresponds to a node of $I_{p'}$ and it is connected to the yellow prefix corresponding to its parent in $I_{p'}$. To avoid overloading the picture, gray positions signifying $1$s in $b_r$ are not drawn.}\label{fig:trapezoid-tree}
\end{figure}

The tree can be constructed in linear time inserting consecutively the nodes associated with $p', p'+1, \ldots$ and maintaining a stack that contains all nodes of the path from the last inserted node to the root: to insert a new node with weight $w$, we pop from the stack all nodes until a node heavier than $w$ is encountered to which the new node is attached.

In terms of the tree $I_{p'}$, the node associated with the sought position $t$ is the nearest ancestor of the node associated with $p$ such that $w_t \ge \ell + p - p'$, where $\ell = q - p + 1$: the condition $w_t \ge \ell + p - p'$ is equivalent to $\ell_t - m \le c_t$ since $w_t = t - p' + c_t$, $m = \ell_p - \ell$, and $\ell_t - \ell_p = p - t$. Since the threshold $\ell + p - p'$ in the condition does not exceed $\ell_{p'}$, we will be able to ignore differences between weights larger than $\ell_{p'}$ in our algorithm by ``cutting'' them in a way. Still, even with this restriction, at first glance this looks like quite a general weighted ancestor problem that admits no constant time solution in the space available. However, we will be able to use a structure common to such trees in order to speed up the computation. The idea is to decompose the tree into heavy paths (see below), as is usually done for weighted ancestor queries, and perform fast queries on the paths via the use of more memory; the trick is that, with some care, this space can be shared among ``similar'' trees and can be ``traded'' for irreducible LCP values relying again on Lemma~\ref{lem:irreducible-lcp}.

\newcommand\treesize{\mathop{\mathrm{size}}}

Consider the tree $I_{p'}$. An edge $(v,u)$ connecting a child $v$ to its parent $u$ is called \emph{heavy} if $\treesize(v) > \treesize(u) / 2$, and \emph{light} otherwise, where $\treesize(w)$ denotes the number of nodes in the subtree rooted at $w$. Thus, at most one child can be connected to $u$ by a heavy edge. One can easily show that any leaf--root path contains at most $\log n$ light edges. All nodes of $I_{p'}$ are decomposed into disjoint \emph{heavy paths}, maximal paths with only heavy edges in them. Note that in this version of the heavy-light decomposition~\cite{SleatorTarjan} the number of heavy edges incident to a given non-leaf node can be zero (it then forms a singleton path). The decomposition can be constructed in linear time in one tree traversal.

The \emph{predecessor search problem}, for a set of increasing numbers $w_1, w_2, \ldots, w_k$ and a given threshold $w \le w_k$, is to find $\min\{w_i \colon w \le w_i\}$. Although the problem, in fact, searches for a successor, we call it ``predecessor search'' as it is essentially an equivalent problem.

\begin{lemma}[{\cite[Lem. 11]{GLN}}]
Consider a tree on $m$ nodes with weights from $[0..n]$ such that some of the nodes are marked, any leaf--root path contains at most $\Oh(\log n)$ marked nodes, and the weights on any leaf--root path increase. One can preprocess the tree in $\Oh(m)$ time so that predecessor search can be performed among the marked ancestors of any node in $\Oh(1)$ time, provided a table computable in $o(n)$ and independent of the tree is precalculated.\footnote{Although Lemma 11 in \cite{GLN} does not claim linear construction time, it easily follows from its proof.}\label{lem:log-height-tree}
\end{lemma}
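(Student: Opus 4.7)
The plan is to (i) collapse the input tree to one of logarithmic depth on the marked nodes, (ii) maintain a packed representation of the current root-to-node weight sequence during a DFS of that smaller tree, and (iii) answer each predecessor query in $\Oh(1)$ time via a fusion-tree--style sketch with lookups into a precomputed table of size $o(n)$.

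For step (i) I would compute, in one bottom-up pass, for each node $v$ of the input tree a pointer $m(v)$ to its nearest marked ancestor. A predecessor query at $v$ then becomes the same query at $m(v)$ in the implicit contracted tree $T'$ on the marked nodes. By hypothesis $T'$ has depth at most $c\log n$, so for every node $u$ of $T'$ the list of ancestor weights along the $u$--root path is strictly increasing and of length $\Oh(\log n)$, and the remaining task is $\Oh(1)$ successor search on this sorted list.

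For steps (ii)--(iii) I would DFS $T'$ maintaining a packed word $A$ that concatenates the current ancestor weights; entering $u$ prepends $u$'s weight and takes a snapshot, leaving undoes it. Since $A$ spans $\Theta(\log^2 n)$ bits (hence $\omega(1)$ machine words), I would avoid scanning it at query time by additionally storing a fusion-style sketch: the projection of each entry of $A$ onto the $\Oh(\log\log n)$ bit positions that pairwise distinguish the current ancestor weights, packed into a single word. A precomputed table indexed by pairs of $\lfloor(1/3)\log n\rfloor$-bit fragments (of size $\Oh(n^{2/3}\cdot\mathrm{polylog}\,n) = o(n)$) would return in $\Oh(1)$ lookups the position of the query's projection within the sketch, and one comparison against the full-width weight at that position confirms the true predecessor.

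The hard part will be maintaining the sketch in $\Oh(1)$ amortized time per DFS push/pop so that the overall preprocessing stays $\Oh(m)$; standard fusion trees need $\Theta(\log n)$ per update, which would blow the budget. I would try to exploit the special structure of the updates here --- every prepend is strictly smaller than everything currently in $A$, so new distinguishing bit positions can only be appended --- and batch the recomputation of distinguishing bits once per heavy path of $T'$, amortizing the remaining work to $\Oh(1)$ per node.
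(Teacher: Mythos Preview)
The paper does not actually prove this lemma: it is imported verbatim from \cite{GLN} (their Lemma~11), with only the footnote asserting that $\Oh(m)$ construction ``easily follows from its proof.'' So there is no in-paper argument to compare your attempt against, and your sketch has to be judged on its own.

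On its own merits, your outline has the right shape --- contract to the marked tree $T'$ of depth $\Oh(\log n)$, DFS it while maintaining the stack of ancestor weights, and answer queries with a fusion-type sketch backed by an $o(n)$ table --- and step~(i) is unproblematic. The gap is exactly where you flag it, and the two patches you propose do not close it. First, ``$\Oh(\log\log n)$ distinguishing bit positions'' is not correct in general: $k=\Theta(\log n)$ keys from $[0..n]$ may require $k-1=\Theta(\log n)$ distinguishing positions (one per adjacent pair in sorted order), so the packed sketch occupies $\Theta(k^2)=\Theta(\log^2 n)$ bits, i.e.\ $\Theta(\log n)$ machine words, and snapshotting it at every marked node already costs $\Theta(m\log n)$. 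Second, monotone prepends do make the set of distinguishing positions grow only, but a new position still forces rewriting all $\Theta(\log n)$ existing projected keys, so a push is $\Theta(\log n)$ work, not $\Oh(1)$; and ``batch once per heavy path of $T'$'' does not help either, since $T'$ can have $\Theta(m)$ heavy paths, each needing a $\Theta(\log n)$ rebuild at its head --- again $\Theta(m\log n)$ in total.

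The missing ingredient is to separate the key array from the index and invoke the dynamic fusion node of P{\u{a}}tra{\c{s}}cu and Thorup (the reference behind Lemma~\ref{lem:fusion-heap}): for a set of $\Oh(w)$ keys it maintains, in addition to the keys themselves, an $\Oh(1)$-word index supporting $\Oh(1)$-time insert, delete, and rank. During the DFS you push/pop the weights in a global array and update the index in $\Oh(1)$; at each marked node you snapshot only this $\Oh(1)$-word index. A query at $v$ reads the index stored at $m(v)$, obtains the rank $h$ of the threshold in $\Oh(1)$, and recovers the answering node --- and, if a confirming key comparison is needed, its weight --- via a level-ancestor query in $T'$ at depth $h$. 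That gives $\Oh(m)$ preprocessing and $\Oh(1)$ queries, with the $o(n)$ table being precisely the lookup table the fusion node needs.
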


For each heavy path, we mark the node closest to the root. Then, the data structure of Lemma~\ref{lem:log-height-tree} is built on each tree $I_{p'}$, which takes $\Oh(n)$ total time for all the trees $I_{p'}$. To answer a weighted ancestor query on $I_{p'}$, we find in $\Oh(1)$ time the heavy path containing the answer using this data structure and, then, consider the predecessor problem inside the path.

Consider a heavy path whose node weights are $w_{i_1}, w_{i_2}, \ldots, w_{i_k}$ in increasing order. We have to answer a predecessor query on the path for a threshold $w$ such that $w \le \ell_{p'}$ (recall that only thresholds ${\le}\ell_{p'}$ are of interest for us) and it is known that its result is in the path, i.e., $w \le w_{i_k}$. Let $w_{i_m} = \max\{w_{i_j} \colon w_{i_j} \le \ell_{p'}\}$. A~trivial constant-time solution for such queries is to construct a bit array $a[0..c]$, where $c = w_{i_{m}} - w_{i_1}$, endowed with a rank data structure such that, for any $d$, we have $a[d] = 1$ iff $d = w_{i_j} - w_{i_1}$, for some $j \in [1..m]$. Then, the predecessor query with a threshold $w$ such that $w \le \min\{\ell_{p'}, w_{i_k}\}$ can be answered by counting the number $h$ of 1s in the subarray $a[0..w - w_{i_1} - 1]$, thus giving the result $w_{i_{h+1}} = \min\{w_{i_j} \colon w \le w_{i_j}\}$. The array $a$  occupies $\Oh(1 + \ell_{p'} / \log n)$ space since $c \le \ell_{p'}$.

It turns out that, with minor changes, the arrays $a$ can be shared among many heavy paths in different trees. However, this approach \emph{per se} leads to $\Oh(n\log n)$ time and space, as will be evident in the sequel. We therefore need a slightly more elaborate solution.

Instead of the array $a[0..c]$, we construct a bit array $\hat{a}[0..\lfloor c / \lfloor\log n\rfloor\rfloor]$ endowed with a rank data structure such that, for any $d$, $\hat{a}[d] = 1$ iff the subarray $a[d\lfloor\log n\rfloor..(d{+}1)\lfloor\log n\rfloor{-}1]$ contains non-zero values (assuming that $a[i] = 0$, for $i > c$). The bit array $\hat{a}$ packed into $\Oh(1 + c / \log^2 n)$ machine words of size $\lfloor\log n\rfloor$ bits can be built from the numbers $w_{i_1}, w_{i_2}, \ldots, w_{i_k}$ in one pass in $\Oh(k + c / \log^2 n)$ time. For each 1 value, $\hat{a}[d] = 1$, we collect all weights $w_{i_j}$ such that $d\lfloor\log n\rfloor \le w_{i_j} - w_{i_1} < (d{+}1)\lfloor\log n\rfloor$ into a set $S_d$. The sets $S_d$, for all $d$ such that $\hat{a}[d] = 1$, are disjoint and non-empty, and can be assembled in one pass through the weights in $\Oh(k)$ time. We also store pointers $P_h$, with $h = 1,2,\ldots$ ($h\le m$), such that $P_h$ refers to a non-empty set $S_d$ such that $\hat{a}[d] = 1$ is the $h$th $1$ value in $\hat{a}$ (i.e., $h$ is the number of 1s in the subarray $\hat{a}[0..d]$). Each set $S_d$ is equipped with the following \emph{fusion heap} data structure.

\begin{lemma}[see \cite{FredmanWillard,PatrascuThorup2}]
For any set $S$ of $\Oh(\log n)$ integers from $[0..n]$, one can build in $\Oh(|S|)$ time a~fusion heap that answers predecessor queries in $\Oh(1)$ time.
\label{lem:fusion-heap}
\end{lemma}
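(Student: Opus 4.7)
The plan is to apply the well-known fusion-heap / atomic-heap machinery of Fredman--Willard (as refined by P\u{a}tra\c{s}cu--Thorup), which is specifically designed to handle small sets in constant time. The classical idea is sketching: identify the at most $|S|-1$ \emph{distinguishing bit positions} of $S$, namely the positions of the highest differing bits between consecutive sorted elements of $S$. Each such position is obtainable as the most significant set bit of an XOR in $\Oh(1)$ time via a lookup table of size $o(n)$ that is shared across all fusion heaps. Projecting each element of $S$ onto only these distinguishing positions yields an \emph{order-preserving sketch}, and for $|S| = \Oh(\log n)$ the packed collection of all sketches is short enough to be manipulated by $\Oh(1)$ word-RAM operations (possibly together with the shared table).

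A predecessor query for a value $q$ is then answered in $\Oh(1)$ by three standard steps: (i) compute the sketch of $q$ via a PEXT-style bit-gather simulated through the shared table; (ii) perform a bit-parallel comparison of this sketch against the packed sketches and extract the transition index via a most-significant-bit lookup, obtaining a candidate $x \in S$; (iii) apply the classical fusion-tree correction, since the sketch of $q \notin S$ may not match the sketch of its true predecessor in $S$. For (iii) one uses $x$ to locate the longest common prefix of $q$ with $S$, replaces $q$ by this prefix extended by $01\cdots 1$ or $10\cdots 0$ depending on which side of $x$ it falls, and reruns the sketch search; this second round pinpoints the true predecessor in $\Oh(1)$ additional word operations.

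The main obstacle is the ``fitting'' argument: verifying that, for $|S| = \Oh(\log n)$ and word size $w = \Theta(\log n)$, all the sketching, packing, comparison, and MSB primitives genuinely run in $\Oh(1)$ word operations using only one precomputed table of size $o(n)$ independent of the input. This is precisely the content of the cited results in~\cite{FredmanWillard,PatrascuThorup2}, which combine the sketching step with either a sufficiently small sketch length or a light secondary indexing layer to keep everything in a constant number of words. Given this machinery, the $\Oh(|S|)$ construction bound follows immediately: sorting $S$ (integer sort in linear time using the shared table), computing distinguishing positions (one XOR/MSB per consecutive pair), extracting each sketch (one bit-gather per element), and packing together cost $\Oh(1)$ word operations per element.
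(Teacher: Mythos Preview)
The paper does not give its own proof of this lemma: it is stated with the tag ``see \cite{FredmanWillard,PatrascuThorup2}'' and used as a black box. Your write-up is therefore not competing against any argument in the paper; it is an expansion of the cited references, and as such it is a faithful high-level sketch of the Fredman--Willard atomic-heap / fusion-node machinery (distinguishing bits, packed sketches, parallel comparison, the two-round correction for out-of-set queries, and a shared $o(n)$-size table to simulate the needed bit-gather and MSB primitives). The one place to be careful is the ``fitting'' step: with $|S|=\Theta(\log n)$ keys of $\Theta(\log n)$ bits each, the packed sketch array occupies $\Theta(\log^2 n)$ bits, i.e., $\Theta(\log n)$ words rather than $\Oh(1)$, so the constant-time comparison genuinely relies on the table-lookup tricks of the cited works rather than on a single word-parallel operation; you acknowledge this by deferring to \cite{FredmanWillard,PatrascuThorup2}, which is exactly what the paper does.
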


With this machinery, a predecessor query with a threshold $w \le \min\{\ell_{p'}, w_{i_k}\}$ can be answered by first counting the number $h$ of 1s in the subarray $\hat{a}[0..\lfloor (w - w_{i_1}) / \lfloor\log n\rfloor\rfloor{-}1]$ and, then, finding in $\Oh(1)$ time the predecessor of $w$ in the fusion heap referred by $P_{h+1}$.

The array $\hat{a}$ occupies $\Oh(1 + c / \log^2 n)$ space, which is $\Oh(1 + \ell_{p'} / \log^2 n)$ since $c \le \ell_{p'}$. The computation of $\hat{a}$, which takes $\Oh(k + \ell_{p'} / \log^2 n)$ time, is the most time and space consuming part of the described construction; all other structures take $\Oh(k)$ time and space. We are to show that the computation of $\hat{a}$ can sometimes be avoided if (almost) the same array was already constructed for a different path. The following lemma is the key for this optimization.

\begin{lemma}
Given a tree $I_{p'}$ for an irreducible position $p'$, consider its node $x$ associated with a non-irreducible position $p > p'$. Let $s[r..n{-}1]$ be the closest irreducible lexicographical neighbour of $s[p..n{-}1]$ and let $c_p = \lcp(r, p)$. Then, the subtree of $I_{p'}$ rooted at $x$ coincides with the tree $I_r$ in which all children of the root with weights ${\ge}c_p$ are cut, then all weights are increased by $p - p'$, and the weight of the root is set to the weight of $x$ (see Figure~\ref{fig:tree-isomorphism}).\label{lem:tree-isomorphism}
\end{lemma}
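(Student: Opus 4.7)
The plan is to exhibit the bijection $p + i \leftrightarrow r + i$ (for $i \ge 0$) between the nodes of the subtree of $I_{p'}$ rooted at $x$ and the nodes surviving the three prescribed operations on $I_r$, and to verify that it preserves both the weights (up to the shift $p - p'$ and the explicit root reset) and the Cartesian-tree parent relation. Under this correspondence, $x$ (at position $p$, weight $w_x = (p - p') + c_p$) matches the transformed root (position $r$, weight explicitly set to $w_x$), while for $i \ge 1$ the weight of $p + i$ in $I_{p'}$ equals $(p - p') + i + c_{p+i}$ and the shifted weight of $r + i$ in $I_r$ equals $(p - p') + i + c_{r+i}$. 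So the whole lemma reduces to showing that both sides retain the same set of indices $i$ and that $c_{p+i} = c_{r+i}$ on those indices.

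The central structural claim is: for $i \in [1..c_p]$ with $p + i \in [p'{+}1..p''{-}1]$ and $r + i \in [r{+}1..r''{-}1]$ (so both are non-irreducible and present in their trees), $c_{p+i} < c_p - i$ iff $c_{r+i} < c_p - i$, and when this common condition holds, $c_{p+i} = c_{r+i}$. To prove it, I will first observe (in the spirit of the argument preceding Lemma~\ref{lem:new-branch}: moving an irreducible position farther away in the SA from $u$ can only decrease its $\lcp$ with $u$) that for any non-irreducible $u$, $c_u = \max\{\lcp(q, u) : q \text{ irreducible}\}$. Because $s[p+i..]$ and $s[r+i..]$ share their first $c_p - i$ characters, let $y_i$ denote the set of suffixes of $s$ sharing this length-$(c_p{-}i)$ prefix, and split the irreducible positions $q$ into those with $s[q..] \in y_i$ and those outside. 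For $q$ inside $y_i$ we have $\lcp(q, p+i), \lcp(q, r+i) \ge c_p - i$; for $q$ outside we have $\lcp(q, p+i) = \lcp(q, r+i) < c_p - i$, as both equal the LCP of $s[q..]$ with the common length-$(c_p{-}i)$ prefix. Hence $c_{p+i} < c_p - i$ iff no irreducible lies in $y_i$ iff $c_{r+i} < c_p - i$, and under this condition both $c$-values equal the same maximum over outside irreducibles.

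Let $i^*$ be the smallest $i \ge 1$ such that $p + i = p''$ or $c_{p+i} \ge c_p - i$, so that the subtree of $x$ equals $\{p, p+1, \ldots, p+i^*-1\}$; analogously define $j^*$ for the surviving nodes of $I_r$. The claim immediately gives $i^* = j^*$ whenever both sides stop via the $c$-condition. The delicate case is when one side hits its range boundary first---say $p + i^* = p''$ while $r + i^* < r''$. Here $s[p + i^*..] = s[p''..]$ is at an irreducible position lying in $y_{i^*}$ (it shares its first $c_p - i^*$ characters with $s[r+i^*..]$), so $\lcp(p'', r+i^*) \ge c_p - i^*$ forces $c_{r+i^*} \ge c_p - i^*$, stopping the $I_r$ side at the same index; the symmetric case uses $r''$ as witness. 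Thus $i^* = j^*$ in every case, the retained positions and weights coincide, and since each tree is the Cartesian tree (parent $=$ nearest preceding position with strictly larger weight) on its weight sequence, the two trees are isomorphic as claimed. The main obstacle is exactly this boundary-synchronization step, where one must recognize the boundary position ($p''$ or $r''$) as itself the irreducible witness that keeps the two sides in lockstep.
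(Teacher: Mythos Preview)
Your argument is correct and follows essentially the same route as the paper's proof: both identify the subtree of $x$ with the contiguous range $[p..p{+}i^*{-}1]$, show that the surviving part of $I_r$ is the matching range $[r..r{+}i^*{-}1]$, and verify $c_{p+j}=c_{r+j}$ for $j$ in that range so that the Cartesian-tree structures agree. Your packaging via the identity $c_u=\max\{\lcp(q,u):q\text{ irreducible}\}$ and the common-prefix set $y_i$ neatly unifies what the paper splits into three separately argued claims~(1)--(3), and your symmetric boundary step (using $p''$ or $r''$ itself as the irreducible witness) is exactly the content of the paper's claims~(1) and~(3).
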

\begin{proof}
Denote by $I$ the subtree rooted at $x$. Observe that all nodes of $I$ are exactly all nodes of $I_{p'}$ associated with positions from a range $[p..p{+}i]$, for some $i < c_p$ (see Figure~\ref{fig:tree-isomorphism}). The position $p + i + 1$ is either irreducible or $c_{p+i+1} + i + 1 \ge c_p$. In order to prove the lemma, it suffices to show that (1)~all positions from $[r{+}1..r{+}i]$ are not irreducible, (2)~$c_{p+j} = c_{r+j}$, for any $j \in [1..i]$, and (3)~the position $r + i + 1$ is either irreducible or $c_{r+i+1} + i + 1 \ge c_p$.

(1) Suppose, to the contrary, that a position $r + j$, for some $j \in [1..i]$, is irreducible. Then, since $\lcp(r, p) = c_p$ and $j \le i < c_p$, we have $\lcp(r + j, p + j) = c_p - j$. Therefore, $c_{p+j}$, the length of the common prefix of $s[p{+}j..n{-}1]$ and its closest irreducible lexicographical neighbour, must be at least $c_p - j$ (since it was assumed that $r + j$ is irreducible). But then the weight $w_{p+j}$ of the node associated with $p + j$ is at least $(c_p - j) + (p + j - p') = c_p + p - p'$, which is equal to the weight $w_p = c_p + p - p'$ of $x$. Thus, $x$ cannot be an ancestor of the node, which is a contradiction.

(2) For $j \in [1..i]$, $s[p{+}j..n{-}1]$ and $s[r{+}j..n{-}1]$ share a common prefix of length $c_p - j$ since $\lcp(r, p) = c_p$ and $j \le i < c_p$. Further, $c_{p+j} < c_p - j$ since the weight $w_{p+j} = c_{p+j} + p + j - p'$ is smaller than the weight $w_p = c_p + p - p'$ of its ancestor $x$. Suppose, without loss of generality, that $s[r{+}j..n{-}1]$ is lexicographically smaller than $s[p{+}j..n{-}1]$ (the case when it is greater is analogous). Then, neither of the suffixes $s[t..n{-}1]$ lexicographically lying between $s[r{+}j..n{-}1]$ and $s[p{+}j..n{-}1]$ can start with an irreducible position, for otherwise $c_{p+j} \ge \lcp(t, p + j) \ge \lcp(r + j, p + j) \ge c_p - j$. Therefore, the closest irreducible lexicographical neighbours of $s[p{+}j..n{-}1]$ and $s[r{+}j..n{-}1]$ coincide and $c_{p + j} = c_{r + j}$ $(<c_p - j \le \lcp(p + j, r + j))$.

(3) Suppose that $r + i + 1$ is not irreducible and, by contradiction, $c_{r + i + 1} < c_p - i - 1$. The suffixes $s[p{+}i{+}1..n{-}1]$ and $s[r{+}i{+}1..n{-}1]$ have a common prefix of length $c_p - i - 1$ (note that $i + 1 \le c_p$). The position $p + i + 1$ is not irreducible since otherwise $c_{r + i + 1} \ge \lcp(p + i + 1, r + i + 1) \ge c_p - i - 1$. Then, we have $c_{p+i+1} \ge c_p - i - 1$ because otherwise the node corresponding to $p + i + 1$ would have weight $w_{p + i + 1} = c_{p + i + 1} + p + i + 1 - p'$ that is smaller than the weight $w_p = c_p + p - p'$ of $x$ and, thus, would be a descendant of $x$. Let $s[t..n{-}1]$ be the closest irreducible lexicographical neighbour of $s[p{+}i{+}1..n{-}1]$ so that $\lcp(t, p + i + 1) = c_{p + i + 1}$. Since $c_{p + i + 1} \ge c_p - i - 1$ and $\lcp(p + i + 1, r + i + 1) \ge c_p - i - 1$, we obtain $\lcp(t, r + i + 1) \ge c_p - i - 1$. Because $t$ is irreducible, we deduce that $c_{r + i + 1} \ge c_p - i - 1$, which is a contradiction.
\end{proof}

\begin{figure}[htb]
\centering
\includegraphics[scale=1]{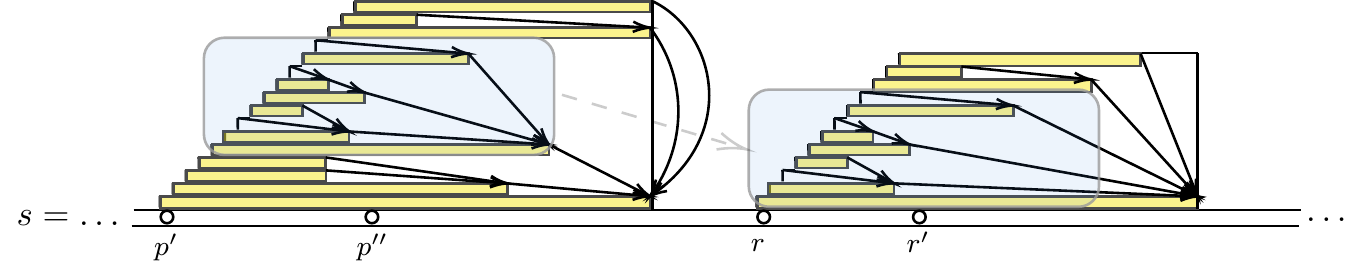}
\caption{For irreducible positions $p'$ and $r$, the subtree of $I_{p'}$ highlighted by the left blue rectangle is isomorphic to $I_r$ after cutting the children of the root of $I_r$ outside of the right blue rectangle.}\label{fig:tree-isomorphism}
\end{figure}

An array $\hat{a}$ corresponding to a heavy path in a tree $I_{p'}$, for an irreducible position~$p'$, occupies $\Oh(1 + \ell_{p'} / \log^2 n)$ space. Recall that $\ell_{p'} = \PLCP[p']$ is an irreducible LCP value since $p'$ is an irreducible position. Therefore, we can afford, for each tree $I_{p'}$, the construction and storage of the array $\hat{a}$ corresponding to the unique heavy path containing the root of~$I_{p'}$. The overall space required for this is $\Oh(n + \sum_{p'} \ell_{p'} / \log^2 n)$, where the sum is through all irreducible positions $p'$, which is upperbounded by $\Oh(n + n / \log n) = \Oh(n)$ due to Lemma~\ref{lem:irreducible-lcp}. Other heavy paths are processed as follows.

Suppose that we are to preprocess a heavy path $x_1 \to x_2 \to \cdots \to x_k$ with node weights $w_{i_1}, w_{i_2}, \ldots, w_{i_k}$ in a tree $I_{p'}$ such that $x_k$ (the node closest to the root) is not the root of~$I_{p'}$. We compute sets $S_d$ and pointers $P_h$ corresponding to the path in $\Oh(k)$ time. As for the array $\hat{a}$, we either construct it for the path from scratch or reuse a suitable array from another path; the details follow. Let $w_{i_m} = \max\{w_{i_j} \colon w_{i_j} \le \ell_{p'}\}$. As was discussed, the array $\hat{a}$ encodes, in a sense, a predecessor data structure for the weights $w_{i_1}, w_{i_2}, \ldots, w_{i_m}$. In order to have some flexibility for the reuse of arrays, we modify this scheme slightly so that sometimes $\hat{a}$ does not encode the last two weights $w_{i_{m-1}}$ and $w_{i_m}$. The algorithm that answers a predecessor query for a threshold $w \le \min\{\ell_{p'}, w_{i_k}\}$ on the path is altered accordingly: we first compare $w$ to $w_{i_{m-1}}$ and $w_{i_m}$, and only then, if necessary, use the array $\hat{a}$ as described above.

Let $x_k$ correspond to a non-irreducible position $p$ and let $s[r..n{-}1]$ be the closest irreducible lexicographical neighbour of $s[p..n{-}1]$. As was shown in Section~\ref{sec:reduction-to-irreducible}, the position $r$ can be found in $\Oh(1)$ time. By Lemma~\ref{lem:tree-isomorphism}, the subtree $I$ rooted at $x_k$ is isomorphic to the tree $I_r$ in which all children of the root with weights greater than or equal to $c_p$ are cut, then all weights are increased by $p - p'$, and the root weight is set to $w_{i_k}$. Denote this isomorphism by $\phi$. Note that $\phi(x_k)$ is the root of $I_r$. The main corollary of Lemma~\ref{lem:tree-isomorphism} is that in the subtree $I$ any edge $u\to v$ that is not incident to $x_k$ is heavy iff the corresponding edge $\phi(u) \to \phi(v)$ in $I_r$ is heavy. Therefore, all edges in the path $\phi(x_1) \to \phi(x_2) \to \cdots \to \phi(x_k)$ are heavy, except, possibly, the last edge $\phi(x_{k-1}) \to \phi(x_k)$, and no heavy edge enters the node $\phi(x_1)$ in $I_r$. By Lemma~\ref{lem:tree-isomorphism}, the weights of the nodes $\phi(x_1)$, $\phi(x_2), \ldots$, $\phi(x_{k-1})$ are $w_{i_1} - (p - p')$, $w_{i_2} - (p - p'), \ldots$, $w_{i_{k-1}} - (p - p')$, respectively.  We proceed further in three separate cases.

\subparagraph{\boldmath Heavy $\phi(x_{k-1}) \to \phi(x_k)$ and $\ell_r$ is large enough.}
Suppose that the edge $\phi(x_{k-1}) \to \phi(x_k)$ is heavy. Then, $\phi(x_1) \to \phi(x_2) \to \cdots \to \phi(x_k)$ is the unique heavy path of $I_r$ that contains the root.  Let $\hat{a}_\phi[0..c_\phi]$ be an array for predecessor queries that was explicitly stored in $\Oh(1 + \ell_r / \log^2 n)$ space for the path $\phi(x_1) \to \phi(x_2) \to \cdots \to \phi(x_k)$. By definition, for any $d \in [0..c_\phi]$, we have $\hat{a}_\phi[d] = 1$ iff, for some $j \in [1..k]$, the number $(w_{i_j} - (p - p')) - (w_{i_1} - (p - p')) = w_{i_j} - w_{i_1}$ lies in the range $[d\lfloor\log n\rfloor .. (d + 1)\lfloor\log n\rfloor{-}1]$. Since the latter is also a criterium for $\hat{a}[d] = 1$, the array $\hat{a}_\phi$ can therefore be reused to imitate $\hat{a}$ if $\hat{a}_\phi$ is sufficiently long to ``encode'' the weights $w_{i_1}, w_{i_2}, \ldots, w_{i_{m-2}}$. More precisely, this is the case iff $\ell_r \ge w_{i_{m-2}} - (p - p')$. Thus, if the edge $\phi(x_{k-1}) \to \phi(x_k)$ is heavy and $\ell_r \ge w_{i_{m-2}} - (p - p')$, then the overall preprocessing of the path $x_1 \to x_2 \to \cdots \to x_k$ takes only $\Oh(k)$ time since we can store a pointer to the array $\hat{a}_\phi$ and reuse $\hat{a}_\phi$ to imitate the array $\hat{a}$.

Unfortunately, neither of these two conditions necessarily holds in general: the edge $\phi(x_{k-1}) \to \phi(x_k)$ might be light in $I_r$ and $\ell_r$ might be less than $w_{i_{m-2}} - (p - p')$.

\subparagraph{\boldmath Light $\phi(x_{k-1}) \to \phi(x_k)$ and $\ell_r$ is large enough.}
Suppose that the edge $\phi(x_{k-1}) \to \phi(x_k)$ is light in $I_r$ and $\ell_r \ge w_{i_{m-2}} - (p - p')$. By Lemma~\ref{lem:tree-isomorphism}, the edge necessarily becomes heavy if we cut all children of the root of $I_r$ with weights greater than or equal to $c_p$. We assign numbers $1,2,\ldots$ to the children of each node in the trees $I_r$ and $I_{p'}$ according to the order in which they were attached during the construction of the trees, i.e., in the increasing order of their associated positions. It follows from the proof of Lemma~\ref{lem:tree-isomorphism} that if $x_{k-1}$ is the $h$th child of $x_{k}$, then $\phi(x_{k-1})$ is the $h$th child of the root of $I_r$. Therefore, the node $\phi(x_{k-1})$ can be located in $\Oh(1)$ time.

We associate with each child of the root of $I_r$ a pointer, initially set to null. If the pointer in $\phi(x_{k-1})$ is still null at the time we access it while preprocessing the heavy path $x_1 \to x_2 \to \cdots \to x_k$, then we create from scratch a new array $\hat{a}_\phi$ for the heavy path $\phi(x_1) \to \phi(x_2) \to \cdots \to \phi(x_{k-1})$ in $\Oh(k + \ell_r / \log^2 n)$ time and set the pointer of $\phi(x_{k-1})$ to refer to this array. Since $\ell_r \ge w_{i_{m-2}} - (p - p')$, the array $\hat{a}_\phi$ can be reused to imitate $\hat{a}$ for the path $x_1 \to x_2 \to \cdots \to x_k$ in the same way as was described above. If the pointer in the node $\phi(x_{k-1})$ is not null, then it already refers to a suitable array $\hat{a}_\phi$ that can be reused (note that $\phi(x_1) \to \phi(x_2) \to \cdots \to \phi(x_{k-1})$ is the unique heavy path of the tree $I_r$ that contains the node $\phi(x_{k-1})$). Thus, the preprocessing takes $\Oh(k)$ time plus, if necessary, $\Oh(k + \ell_r / \log^2 n)$ time required to construct a new array $\hat{a}_\phi$.

The following lemma shows that a non-null pointer to an array $\hat{a}_\phi$ might be assigned to at most $\log n$ distinct children of the root of $I_r$ (namely, those children that become connected to the root by a heavy edge after a number of children with greater weights were removed). This implies that the overall construction time for all the arrays $\hat{a}_\phi$ throughout the whole algorithm is $\Oh(n + \sum_{r}(\ell_r / \log^2 n)\log n) = \Oh(n)$, where the sum is through all irreducible positions $r$ (so that $\sum_r \ell_r = \Oh(n\log n)$ by Lemma~\ref{lem:irreducible-lcp}).

\begin{lemma}
Let $S$ be a tree with at most $n$ nodes whose root $r$ has $m$ children ordered arbitrarily. Suppose that we remove the children of the root (with the subtrees rooted at them) from right to left, one by one, thus producing trees $S_1, S_2, \ldots, S_m$. Let $z_i$ be a node of the tree $S_i$ such that $z_i$ is connected to $r$ by a heavy edge, or $z_i$ is $r$ itself if $r$ has no incident heavy edges in $S_i$. Then, the set $\{z_1, z_2, \ldots, z_m\}$ contains at most $\log n$ distinct nodes.\label{lem:heavy-after-cut}
\end{lemma}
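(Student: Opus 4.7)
The plan is to show that each time $z_i$ takes on a new distinct value, the remaining tree $S_i$ must shrink by at least a factor of two; the main obstacle is to justify the persistence claim that a child once declared heavy cannot be silently replaced by a different child without first being removed. Once that is in hand, the logarithmic bound follows from a standard doubling estimate.

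First I would establish the persistence property. If a child $c$ of $r$ satisfies $\treesize(c) > |S_i|/2$ in some tree $S_i$ (so $z_i = c$), then for every $i' > i$ such that $c$ is still present in $S_{i'}$, the inequality $\treesize(c) > |S_{i'}|/2$ continues to hold, since removing further subtrees from the right only decreases $|S_{i'}|$ while $\treesize(c)$ is unaffected. Because no two distinct children of $r$ can simultaneously each exceed half the tree size, this forces $z_{i'} = c$ until $c$ itself is deleted. In particular, whenever $z$ transitions between two distinct child values, the earlier one must have been removed strictly between the two steps.

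Next, I enumerate the distinct child values of $r$ that appear in the sequence $z_1, z_2, \ldots, z_m$, in order of first appearance, as $c^{(1)}, c^{(2)}, \ldots, c^{(t)}$, and let $i_\ell$ be the smallest index with $z_{i_\ell} = c^{(\ell)}$. By persistence, the subtree of $c^{(\ell)}$ has been removed by step $i_{\ell+1}$; combined with $\treesize(c^{(\ell)}) > |S_{i_\ell}|/2$, this gives
\[
|S_{i_{\ell+1}}| \;\le\; |S_{i_\ell}| - \treesize(c^{(\ell)}) \;<\; |S_{i_\ell}|/2.
\]

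Iterating the inequality yields $|S_{i_t}| < n/2^{\,t-1}$. On the other hand, $S_{i_t}$ contains the root together with $c^{(t)}$'s subtree, and the heaviness of $c^{(t)}$ combined with the integrality of sizes forces $\treesize(c^{(t)}) \ge 2$, so $|S_{i_t}| \ge 3$. Comparing the two estimates gives $t \le \log n - 1$. Counting the one additional value $r$ that may also appear among the $z_i$, the set $\{z_1, z_2, \ldots, z_m\}$ contains at most $\log n$ distinct nodes.
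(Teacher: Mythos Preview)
Your proof is correct and follows essentially the same approach as the paper: the persistence observation (a heavy child stays heavy until it is the one removed) is exactly the paper's first step, and the halving argument that each new distinct child forces the tree size below half is the paper's second step. Your write-up is simply a more explicit version of the same idea, including the bookkeeping for the extra value $r$.
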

\begin{proof}
If $z_i \to r$ is a heavy edge in $S_i$, then $\treesize(z_i) > \treesize(r) / 2$ in $S_i$. Therefore, if we cut any other child of $r$ in $S_i$, only the number $\treesize(r)$ decreases and, thus, the edge remains heavy. But when we remove $z_i$ and its subtree from $S_i$, the number $\treesize(r)$ decreases by more than half. Such halving may happen at most $\log n - 1$ times, and the result follows.
\end{proof}

\subparagraph{\boldmath Small $\ell_r$.}
Suppose that $\ell_r < w_{i_{m-2}} - (p - p')$. Let $\bar{p}$ be the position associated with the node $x_{k-1}$ in the tree $I_{p'}$ and let $s[\bar{r}..n{-}1]$ be the closest irreducible lexicographical neighbour of $s[\bar{p}..n{-}1]$. By analogy to the isomorphism $\phi$, we define using Lemma~\ref{lem:tree-isomorphism} an isomorphism $\psi$ that maps the subtree of $I_{p'}$ rooted at the node $x_{k-1}$ onto a ``pruned'' tree $I_{\bar{r}}$ in which all children of the root with weights greater than or equal to $c_{\bar{p}}$ are cut. Note that $\psi(x_{k-1})$ is the root of $I_{\bar{r}}$. By Lemma~\ref{lem:tree-isomorphism}, the weights of the nodes $\psi(x_1)$, $\psi(x_2), \ldots$, $\psi(x_{k-2})$ are $w_{i_1} - (\bar{p} - p')$, $w_{i_2} - (\bar{p} - p'), \ldots$, $w_{i_{k-2}} - (\bar{p} - p')$, respectively.

It turns out that, instead of imitating the array $\hat{a}$ for the path $x_1 \to x_2 \to \cdots \to x_k$ using an array $\hat{a}_\phi$ from the tree $I_r$, one can in quite the same way imitate $\hat{a}$ using an appropriate array $\hat{a}_\psi$ from $I_{\bar{r}}$, which must be sufficiently long in the case $\ell_r < w_{i_{m-2}} - (p - p')$. More precisely, we are to prove that $\ell_{\bar{r}} \ge w_{i_{m-2}} - (\bar{p} - p')$, which guarantees that essentially the same approach works well: if the edge $\psi(x_{k-2}) \to \psi(x_{k-1})$ is heavy, then $\hat{a}_\psi$ is an array associated with the unique heavy path containing the root of $I_{\bar{r}}$ and, since $\ell_{\bar{r}} \ge w_{i_{m-2}} - (\bar{p} - p')$, the array $\hat{a}_\psi$ is long enough to imitate $\hat{a}$; if the edge $\phi(x_{k-2}) \to \psi(x_{k-1})$ is light, then we either reuse a suitable array $\hat{a}_\psi$ that was already stored in the child $\psi(x_{k-2})$ of the root of $I_{\bar{r}}$ (again $\hat{a}_\psi$ can imitate $\hat{a}$ since $\ell_{\bar{r}} \ge w_{i_{m-2}} - (\bar{p} - p')$) or we create this array $\hat{a}_\psi$ from scratch for the path $\psi(x_1) \to \psi(x_2) \to \cdots \to \psi(x_{k-2})$ in $\Oh(k + \ell_{\bar{r}} / \log^2 n)$ time and store a pointer to it in $\psi(x_{k-2})$. We do not go further into details since they are essentially the same as above.

We actually prove a stronger condition $\ell_{\bar{r}} \ge w_{i_{m-1}} - (\bar{p} - p')$, which implies $\ell_{\bar{r}} \ge w_{i_{m-2}} - (\bar{p} - p')$ since $w_{i_{m-1}} > w_{i_{m-2}}$. Recall that $w_{i_{m-1}} = c_{\bar{p}} + \bar{p} - p'$. Hence, the stronger condition is equivalent to $\ell_{\bar{r}} \ge c_{\bar{p}}$. If the suffix $s[\bar{r}..n{-}1]$ is lexicographically larger than $s[\bar{p}..n{-}1]$, then we immediately obtain $\ell_{\bar{r}} = \plcp[\bar{r}] \ge \lcp(\bar{p}, \bar{r}) = c_{\bar{p}}$. Assume that $s[\bar{r}..n{-}1]$ is lexicographically smaller than $s[\bar{p}..n{-}1]$.
Let us show that this case is actually impossible since it contradicts the condition of ``small $\ell_r$'' that was assumed in the beginning: $\ell_r < w_{i_{m-2}} - (p - p')$.
Denote $j = \bar{p} - p$. Since $\lcp(p, r) = c_p$ and $j < c_p$, we obtain $\lcp(\bar{p}, r + j) = c_p - j$. Further, $c_{\bar{p}} < c_p - j$ because the weight $w_{i_{m-1}} = c_{\bar{p}} + \bar{p} - p'$ of the node $x_{m-1}$ is less than the weight $w_{i_{k}} = c_p + p - p'$ of $x_k$. Since $\lcp(\bar{p}, \bar{r}) = c_{\bar{p}} < c_p - j = \lcp(\bar{p}, r + j)$, we obtain $\lcp(\bar{r}, r + j) = c_{\bar{p}} < \lcp(\bar{p}, r + j)$ and, further, since we assumed that $s[\bar{r} ..n{-}1]$ is lexicographically smaller than $s[\bar{p}..n{-}1]$, the suffix $s[\bar{r}..n{-}1]$ is lexicographically smaller than $s[r{+}j..n{-}1]$. Hence, $\ell_{r + j} \ge \lcp(\bar{r},r + j) = c_{\bar{p}}$. It follows from Lemma~\ref{lem:inherent-arrays} that $\ell_r = \ell_{r + j} + j$. Then, $\ell_r \ge c_{\bar{p}} + j$. Conversely, we deduce $\ell_r < w_{i_{m-2}} - (p - p') < w_{i_{m-1}} - (p - p')$ and, substituting $w_{i_{m-1}} = c_{\bar{p}} + p + j - p'$, we obtain $\ell_r < c_{\bar{p}} + j$, which is a contradiction.

\medskip

To sum up, we spend linear time preprocessing each heavy path in every tree $I_{p'}$ plus $\Oh(n)$ total time to construct arrays $\hat{a}_\phi$ and $\hat{a}_\psi$, each of which is associated with one of $\log n$ particular children of the root in one of the trees (according to Lemma~\ref{lem:heavy-after-cut}). Therefore, since all the trees $I_{p'}$ contain $n$ nodes in total, the overall time is $\Oh(n)$.

\section{Concluding Remarks}
\label{sec:conclusion}

We believe that the presented solution, while certainly still quite complicated and impractical, is more implementable than that of~\cite{GLN}. We expect that, with additional combinatorial insights, one can simplify it even further, perhaps eventually arriving at a practical data structure for the problem. A number of natural and less vaguely formulated problems also arise. For example, can we maintain the weighted ancestor data structure during an online construction of the suffix tree? Is it possible to reduce the space usage and support weighted ancestors in compact suffix trees~\cite{MunroNavarroNekrich,Sadakane} with $\Oh(n)$ or $\Oh(n\log\sigma)$ additional \emph{bits} of space, where $\sigma$ is the alphabet size?

To the best of our knowledge, the lemma about the sum of irreducible LCP values has found relatively few applications other than in the construction of LCP and PLCP arrays (e.g., see references in~\cite{KarkkainenKempaPiatkowski}):~\cite{BadkobehEtAl2,BadkobehEtAl,KempaKociumaka2,ValimakiPuglisi} (the application in~\cite{KempaKociumaka2}, however, is somewhat spectacular). The techniques developed in our paper might be interesting by themselves and pave the way to more applications of irreducible LCPs in algorithms on strings.



\bibliography{suff-tree}

\end{document}